\def\ps@pprintTitle{%
 \let\@oddhead\@empty
 \let\@evenhead\@empty
 \def\@oddfoot{}%
 \let\@evenfoot\@oddfoot}
\newcommand\dd{\mathrm{d}}
\newcommand\ee{\mathrm{e}}
\newcommand\ii{\mathrm{i}}
\newcommand\Id{\mathrm{Id}}
\newcommand\wt{\widetilde}
\newcommand\bs{\boldsymbol{s}}
\newcommand\Sf{s_{\mathrm{f}}}
\newcommand\Si{s_{\mathrm{i}}}
\newcommand\mQ{\mathcal{Q}}
\newcommand\Sarr{s_{\uparrow}}
\DeclareMathOperator\sgn{sgn}
\DeclareMathOperator\lspan{span}
\newtheorem{definition}{Definition}
\newtheorem{theorem}{Theorem}
\newtheorem{lemma}{Lemma}
{\theoremstyle{remark} \newtheorem{remark}{Remark}}
\newcommand{\mc}[1]{\mathcal{#1}}
\newcommand{\mf}[1]{\mathfrak{#1}}
\newcommand\added[1]{{#1}}
\title{Inchworm Monte Carlo method for open quantum systems}
\author[]{Zhenning Cai\fnref{fn1}}
\ead{matcz@nus.edu.sg}
\author[]{Jianfeng Lu\fnref{fn2}}
\ead{jianfeng@math.duke.edu}
\author[]{Siyao Yang\fnref{fn1}}
\ead{siyao_yang@nus.edu.sg}
\begin{document}

\begin{abstract}
  We investigate in this work a recently proposed diagrammatic
  quantum Monte Carlo method --- the inchworm Monte Carlo method --- for
  open quantum systems. We establish its validity rigorously based on
  resummation of Dyson series.  Moreover, we introduce an
  integro-differential equation formulation for open quantum systems,
  which illuminates the mathematical structure of the inchworm
  algorithm.  This new formulation leads to an improvement of the
  inchworm algorithm by introducing classical deterministic
  time-integration schemes. The numerical method is validated by
  applications to the spin-boson model.
\end{abstract}

\begin{keyword}
  quantum Monte Carlo \sep open quantum system \sep diagrammatic
  methods \sep spin-boson model
\end{keyword}

\maketitle

\section{Introduction}

For realistic quantum systems, the system we are interested in is
often coupled with an uninteresting environment to a non-negligible
extent, which requires us to study the open quantum system, including
the effects such as quantum decoherence \cite{Suominen2014} and
quantum dissipation \cite{Esposito2010}. The application of open
quantum system ranges in a wide variety of quantum fields, including
quantum optical systems \cite{Breuer2007}, nonlinear statistical
mechanics \cite{Lindenberg1990}, quantum computation
\cite{Nielsen2010}, etc. \added{In reality, no quantum system in experiment is
  completely isolated from the environment. The development of quantum
  technology also relies on mitigating and sometimes utilizing the
  quantum dissipation \cite{Plenio2002}, which calls for better
  understanding of open quantum systems both theoretically and
  numerically. The open quantum system can also arise from approximate algorithms for many-body quantum systems in condensed matter physics, such as the rather successful dynamical mean field theory (DMFT) approach which reduces a full many-body system to an impurity problem  which is then treated numerically as an open quantum system \cite{Gull2011RMP}.  }

\added{The challenge of numerical simulation of open quantum system
  lies in the coupling of the system of interest to a usually large
  bath which contains a huge number of degree of freedom. This makes the direct simulation of the whole system together with bath impossible. Thus it is necessary to carry out dimension reduction, hopefully to integrate out the degree of freedoms associated with the bath. Such issues of dimension reduction also arise in many other areas of modeling and simulations. Indeed, while we focus in this work on open quantum system, many of the ideas have their classical counterparts, some of them have led to active developments in applied mathematics, see e.g., \cite{Levermore1996, Chorin2000, Darve2009, Hijon2010, Nicoud2011, ChorinLu2015, LeiBakerLi2016, Torrilhon2016, Chen2017} and books \cite{EvansMorris1990, Kalnay2003, Garnier2009, E2011, ChorinHald2013}. It is our hope that study of dimension reduction for open quantum systems can also shed light on further development for dimension reduction of classical systems.}

\added{In the context of open quantum system, many dimension reduction techniques have been developed over the years.} 
One conventional approach is to apply
the Nakajima-Zwanzig projection operator technique to obtain an
integro-differential master equation \cite{Nakajima1958,
  Zwanzig1960}, \added{where the memory effects resulting from integrating out the bath degree of freedom are encoded by some memory kernel}. \added{The classical analog is the well-known Mori-Zwanzig formalism \cite{Mori1965,Zwanzig1961}}. By taking the weak-coupling limit, \added{the memory effect can be neglected to obtain a Markovian
approximation, known as the Lindblad equation \cite{Davies1974, Davies1976}, which is the quantum analog of the Langevin dynamics}. The Markovian approximation however
breaks down for open systems with stronger coupling, \added{so that one has to keep track of the non-Markovian dynamics of the projected density matrix describing the system of interests \cite{Breuer2007}}. \added{For classical systems, such dynamics is often modeled and studied as generalized Langevin equations \cite{Henery1972, Zwanzig1973}, while for quantum systems, the non-Markovian dynamics is much more complicated, and many works have been devoted to this topic.} \added{Analogously to the generalized Langevin equations, generalized quantum master equation with memory kernel has been used to model non-Markovian open quantum dynamics \cite{MeierTannor1999, ShiGeva2003, Kelly2013, Montoya-Castillo2016}, though it is often complicated to come up with a good estimate of the memory kernel, especially when the memory is long range.}

\added{Besides the quantum master equation framework, the non-Markovian evolution of the system can be also directly modeled and simulated using} the
path-integral approaches such as the QuAPI (quasi-adiabatic propagator
path integral) methods \cite{Makri1995, Makri1996} and the HEOM
(hierarchical equations of motion) technique
\cite{Strumpfer2012}. These methods yield accurate numerical results (\added{often referred as ``numerically exact'' in the literature}),
while the computational cost is extremely huge, often unaffordable. 
To reduce
the computational cost, one common strategy is to replace the exact summation
or numerical integration in these methods by  Monte
Carlo methods. In this paper, we are going to study a specific type of
path-integral methods called the diagrammatic quantum Monte Carlo
method \cite{Werner2009} to solve the time-dependent open quantum
systems.  In particular, our study is largely motivated by the
inchworm Monte Carlo method recently proposed in \cite{Cohen2015,
  Chen2017a} to reduce the variance in quantum Monte Carlo by
diagrammatic resummation.

The basis of the diagrammatic quantum Monte Carlo method has been established
as early as 1960s \cite{Keldysh1965}. However, as other quantum Monte Carlo
methods, this type of methods also suffer from the notorious dynamical sign
problem, meaning that the number of Monte Carlo samples is required to grow at
least exponentially in time in order to keep the accuracy of the simulation. To
relieve the dynamical sign problem, Stockburger and Grabert introduced
stochastic unraveling of influence functionals in \cite{StockBurger2002}, and
Makri \cite{Makarov1994, Makri2017} proposed to assume a finite memory time of
the bath-correlation function and apply an iterative procedure to efficiently
implement the summation. The inchworm Monte Carlo method applies the idea of
diagrammatic resummation as in the bold diagrammatic Monte Carlo method
\cite{Prokof'ev2007} to the real-time evolution of the quantum systems. Similar
to the bold-line diagrammatic Monte Carlo method proposed in \cite{Gull2010,
Gull2011} (see also a more mathematical presentation \cite{LiLu:BDMC}), the
inchworm Monte Carlo
method tackles the dynamical sign problem by lumping a large number of diagrams
into a ``bold line'', which effectively reduces the number of total diagrams to
be summed to reach desired numerical accuracy. 
The inchworm method makes maximum use of the previous calculations, at the
expense of higher memory cost for storing all Green's functions. Despite its
success in the application of spin-boson model \cite{Chen2017b} and the
Anderson impurity model \cite{Dong2017, Ridley2018}, it requires 
better understanding to reveal the intrinsic mathematical structure of the inchworm method, and to further improve the method. In particular, it would be interesting to see how the bold lines are built
on the basis of shorter bold lines, and how the bold lines propagate when the
iterative procedure is precise. While the answers to these questions are not
detailed in the original derivation of the inchworm method \cite{Cohen2015,
Chen2017a}, in this paper, we will show the validity of the inchworm method
with mathematical rigor. The rigorous proof not only justifies the original algorithm, but also leads us to a new formulation of the open quantum
system as an integro-differential equation, based on which more accurate and efficient numerical approaches can be developed. 

The inchworm Monte Carlo method and the new integro-differential
equation formulation will be proved to be applicable for the Ohmic
spin-boson model, which is a simple open quantum system widely used as
benchmark problems \cite{Wang2000, Kernan2002, Duan2017}. Based on the
integro-differential equation, part of the Monte Carlo integration can
be replaced by classical time-integration methods to achieve higher
accuracy. The resulting new algorithm will be applied to the spin-boson model
to show the numerical efficiency. \added{Possible extensions will also be  discussed. More generally, the inchworm method represents a wider class of model reduction and variance reduction techniques for open quantum systems, which are worth investigating from the mathematical point of view, as they offer some new challenges as well as new toolsets that might be applicable to other areas of study.} 

The rest of this paper is organized as follows. In Section \ref{sec:Dyson}, we
introduce the basic formulation of the open quantum system and its Dyson series
expansion. Section \ref{sec:inchworm} gives a complete review of the inchworm
Monte Carlo method and proves its validity. The integro-differential equation
associate with the inchworm algorithm is derived in Section \ref{sec:id}. As an
application, we analyze the spin-boson model in Section \ref{sec:spin-boson}.
Our new numerical method is introduced in Section \ref{sec:num} and some
numerical examples are given in Section \ref{sec:examples}. A simple summary is
given in Section \ref{sec:summary} as the end of the paper.

\section{Dyson series expansion for open quantum systems} \label{sec:Dyson}

Before considering open quantum system, let us first recall the
time-dependent perturbation theory and the associated Dyson
series. Consider the von Neumann equation for quantum evolution (of a
closed system)
\begin{equation} \label{eq:vonNeumann}
\ii \frac{\dd \rho}{\dd t} = [H, \rho],
\end{equation}
where $\rho(t)$ is the density matrix at time $t$, and $H$ is the
Schr\"odinger picture Hamiltonian with the form
\begin{displaymath}
H = H_0 + W.
\end{displaymath}
Here $H_0$ is the unperturbed Hamiltonian and $W$ is viewed as a perturbation.
Following the convention, for any Hermitian operator $A$, we define $\langle A
\rangle = \tr(\rho(0) A)$. We are interested in the evolution of the
expectation for a given observable $O$, defined by
\begin{equation} \label{eq:O(t)}
\langle O(t) \rangle = \tr(O \rho(t))
  = \tr(O \ee^{-\ii t H} \rho(0) \ee^{\ii t H})
  = \langle \ee^{\ii t H} O \ee^{-\ii t H} \rangle.
\end{equation}
Using standard time dependent perturbation theory, the unitary group $\ee^{-\ii
t H}$ generated by $H$ can be represented using a Dyson series expansion \cite{Dyson1949}
\begin{equation} \label{eq:Dyson}
\ee^{-\ii t H} = \sum_{n=0}^{+\infty}
  \int_{t > t_n > \cdots > t_1 > 0} (-\ii)^n
    \ee^{-\ii(t-t_n)H_0} W \ee^{-\ii(t_n - t_{n-1})H_0} W \cdots
  W \ee^{-\ii(t_2 - t_1)H_0} W \ee^{-\ii t_1 H_0}
    \,\dd t_1 \cdots \,\dd t_n,
\end{equation}
where the integral should be interpreted as
\begin{equation}
\int_{t > t_n > \cdots > t_1 > 0}
  \,\dd t_1 \cdots \,\dd t_n =
\int_0^t \int_0^{t_n} \cdots \int_0^{t_2}
  \,\dd t_1 \cdots \,\dd t_{n-1} \,\dd t_n.
\end{equation}
Inserting the Dyson series \eqref{eq:Dyson} into \eqref{eq:O(t)}, one obtains
\begin{equation} \label{eq:O}
\begin{split}
\langle O(t) \rangle &= \sum_{n=0}^{+\infty} \sum_{n'=0}^{+\infty}
  \int_{t > t_n > \cdots > t_1 > 0}
    \int_{t > t'_{n'} > \cdots > t'_1 > 0} (-\ii)^n \ii^{n'} \\
  & \quad \langle \ee^{\ii t'_1 H_0} W \ee^{\ii (t'_2 - t'_1) H_0}  W \cdots
      W \ee^{\ii (t'_{n'} - t'_{n'-1}) H_0} W \ee^{\ii (t-t'_{n'}) H_0}
      O \times {}\\
  & \quad \phantom{\langle} \ee^{-\ii (t-t_n) H_0} W
    \ee^{-\ii (t_n - t_{n-1}) H_0} W
    \cdots W \ee^{-\ii (t_2 - t_1) H_0} W \ee^{-\ii t_1 H_0} \rangle
    \,\dd t_1' \cdots \,\dd t_n' \,\dd t_1 \cdots \,\dd t_n.
\end{split}
\end{equation}
Since the unperturbed Hamiltonian $H_0$ is usually easier to solve,
the above expansion provides the basis of a feasible approach to find
$\langle O(t) \rangle$ using Monte Carlo method.

For notational simplicity, the above integral is often denoted by the
Keldysh contour plotted in Figure \ref{fig:Keldysh}. The Keldysh
contour should be read following the arrows in the diagram, and
therefore has a forward (upper) branch and a backward (lower)
branch. The symbols are interpreted as follows:
\begin{itemize}
\item Each line segment connecting two adjacent time points labeled by
  $t_{\mathrm{s}}$ and $t_{\mathrm{f}}$ means a propagator $\ee^{-\ii
  (t_{\mathrm{f}} - t_{\mathrm{s}}) H_0}$. On the forward branch,
  $t_{\mathrm{f}} > t_{\mathrm{s}}$, while on the backward branch,
  $t_{\mathrm{f}} < t_{\mathrm{s}}$.
\item Each black dot introduces a perturbation operator $\pm\ii W$, where we
  take the minus sign on the forward branch, and the plus sign on the backward
  branch. At the same time, every black dot also represents an integral with
  respect to the label, whose range is from $0$ to the adjacent label to its
  right.
\item The cross sign at time $t$ means the observable in the Schr\"odinger
  picture.
\end{itemize}
Note that according to the above interpretation, two Keldysh contours differ
only when at least one of the values of $n$, $n'$ and $t$ is different, while
the positions of the labels on each branch do not matter. Thus, by taking the
expectation $\langle \cdot \rangle$ of this ``contour'', we obtain the summand
in \eqref{eq:O}. Therefore $\langle O(t) \rangle$ can be understood as the sum
of the expectations of all possible Keldysh contours.
\begin{figure}[!ht]
\centering
\includegraphics[width=.45\textwidth]{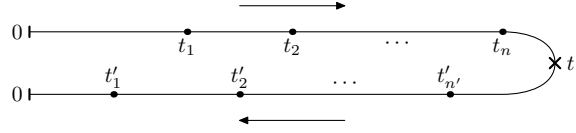}
\caption{Keldysh contour}
\label{fig:Keldysh}
\end{figure}

Such an interpretation also shows that we do not need to distinguish the
forward and backward branches when writing down the integrals. In fact, when
the series \eqref{eq:O} is absolutely convergent in the sense that
\begin{displaymath}
\begin{split}
 \sum_{n=0}^{+\infty} \sum_{n'=0}^{+\infty} &
  \int_{t > t_n > \cdots > t_1 > 0} \int_{t > t'_{n'} > \cdots > t'_1 > 0}
  \biggl\lvert \langle \ee^{\ii t'_1 H_0} W \ee^{\ii (t'_2 - t'_1) H_0}  W \cdots
      W \ee^{\ii (t'_{n'} - t'_{n'-1}) H_0} W \ee^{\ii (t-t'_{n'}) H_0}
      O \times {}  \\
  & \times  \ee^{-\ii (t-t_n) H_0} W \ee^{-\ii (t_n - t_{n-1}) H_0} W
    \cdots W \ee^{-\ii (t_2 - t_1) H_0} W \ee^{-\ii t_1 H_0} \rangle \biggr\rvert
    \,\dd t_1' \cdots \,\dd t_n' \,\dd t_1 \cdots \,\dd t_n < +\infty,
\end{split}
\end{displaymath}
we can reformulate \eqref{eq:O} as
\begin{equation} \label{eq:observable}
\begin{split}
\langle O(t) \rangle &= \sum_{m=0}^{+\infty}
  \int_{2t > s_m > \cdots > s_1 > 0} (-1)^{\#\{\bs < t\}} \ii^m \times {} \\
& \quad \times
  \left\langle G^{(0)}(2t, s_m) W G^{(0)}(s_m, s_{m-1}) W
  \cdots W G^{(0)}(s_2, s_1) W G^{(0)}(s_1, 0) \right\rangle
  \,\dd s_1 \cdots \,\dd s_m,
\end{split}
\end{equation}
where we use $\bs$ as a short-hand for the decreasing sequence
$(s_m, \cdots, s_1)$ and use $\#\{\bs < t\}$ to denote the number of elements in
$\bs$ which are less than $t$, \textit{i.e.}, the number of $s_k$ on
the forward branch of the Keldysh contour. For a given $t$, the
propagator $G^{(0)}$ is defined as
\begin{equation} \label{eq:G0}
G^{(0)}(\Sf, \Si) = \left\{ \begin{array}{ll}
  \ee^{-\ii (\Sf - \Si) H_0}, & \text{if } \Si \leqslant \Sf < t, \\
  \ee^{\ii (\Sf - \Si) H_0}, & \text{if } t \leqslant \Si \leqslant \Sf, \\
  \ee^{\ii (\Sf - t) H_0} O \ee^{-\ii (t - \Si) H_0},
    & \text{if } \Si < t \leqslant \Sf.
\end{array} \right.
\end{equation}
The integral \eqref{eq:observable} can also be understood diagramatically as the
``unfolded Keldysh contour'' plotted in Figure \ref{fig:UnfoldedKeldysh}. In
order to use only a single integral in \eqref{eq:observable}, we set the range
of the unfolded Keldysh contour to be $[0, 2t]$, and the mapping of time points
from the unfolded Keldysh contour to the original Keldysh contour has been
implied in the definition of $G^{(0)}(\cdot,\cdot)$. By comparing
\eqref{eq:observable} with Figure \ref{fig:UnfoldedKeldysh}, one can see that
$G^{(0)}(\cdot,\cdot)$ can be considered as the unperturbed propagator on the
unfolded Keldysh contour, with an action of observable $O$ at time $t$.
\begin{figure}[!ht]
\centering
\includegraphics[width=.9\textwidth]{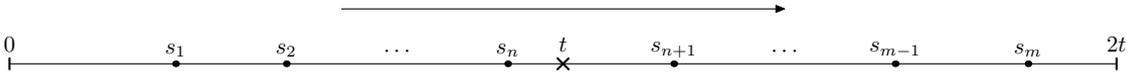}
\caption{Unfolded Keldysh contour}
\label{fig:UnfoldedKeldysh}
\end{figure}

To proceed, we now assume that the von Neumann equation
\eqref{eq:vonNeumann} describes an open quantum system coupled with a
bath, which means that both $\rho$ and $H$ are Hermitian operators on
the Hilbert space $\mc{H} = \mc{H}_s \otimes \mc{H}_b$, with
$\mc{H}_s$ and $\mc{H}_b$ representing respectively the Hilbert spaces
associated with the system and the bath. We let $H_0$ be the Hamiltonian
without coupling:
\begin{displaymath}
H_0 = H_s \otimes \mathrm{Id}_b + \mathrm{Id}_s \otimes H_b,
\end{displaymath}
where $H_s$ and $H_b$ are respectively the uncoupled Hamiltonians for
the system and the bath, and $\Id_s$ and $\Id_b$ are respectively the
identity operators for the system and the bath. Then the perturbation
$W$ describes the coupling, and here we assume $W$ takes the form
without loss of generality
\begin{displaymath}
W = W_s \otimes W_b.
\end{displaymath}
Furthermore, we assume the initial density matrix has the separable
form $\rho(0) = \rho_s \otimes \rho_b$, and we are concerned with observables
acting only on the system $O = O_s \otimes \Id_b$ (recall that physically the
system is the interesting part). With these assumptions, \eqref{eq:observable}
becomes
\begin{equation} \label{eq:observable1}
\langle O(t) \rangle = \sum_{m=0}^{+\infty}
  \ii^m \int_{2t > s_m > \cdots > s_1 > 0}
    (-1)^{\#\{\bs < t\}} \tr_s(\rho_s \mathcal{U}^{(0)}(2t, \bs, 0))
    \mathcal{L}_b(\bs) \,\dd s_1 \cdots \,\dd s_m,
\end{equation}
where the integrand is separated into $\mc{U}^{(0)}$ and $\mc{L}_b$ for the
system and bath parts:
\begin{align}
\label{eq:mc_U}
\begin{split}
\mc{U}^{(0)}(\Sf, \bs, \Si) &=
  \mc{U}^{(0)}(\Sf, s_m, \cdots, s_1, \Si) \\
&= G_s^{(0)}(\Sf, s_m) W_s G_s^{(0)}(s_m, s_{m-1}) W_s
  \cdots W_s G_s^{(0)}(s_2, s_1) W_s G_s^{(0)}(s_1, \Si),
\end{split} \\
\label{eq:mc_L}
\mathcal{L}_b(\bs) &=
  \tr_b(\rho_b G_b^{(0)}(2t, s_m) W_b G_b^{(0)}(s_m, s_{m-1}) W_b
  \cdots W_b G_b^{(0)}(s_2, s_1) W_b G_b^{(0)}(s_1, 0)),
\end{align}
where $\tr_s$ and $\tr_b$ take traces of the system and bath 
respectively. The propagators $G_s^{(0)}$ and $G_b^{(0)}$ are defined
similarly to \eqref{eq:G0}:
\begin{equation}
  G_s^{(0)}(\Sf, \Si) =
  \begin{cases}
    \ee^{-\ii (\Sf - \Si) H_s},
    & \text{if } \Si \leqslant \Sf < t, \\[5pt]
    \ee^{-\ii (\Si - \Sf) H_s},
    & \text{if } t \leqslant \Si \leqslant \Sf, \\[5pt]
    \ee^{-\ii (t - \Sf) H_s} O_s \ee^{-\ii (t - \Si) H_s},
    & \text{if } \Si < t \leqslant \Sf,
  \end{cases}
\end{equation}
and
\begin{equation}
  G_b^{(0)}(\Sf, \Si) = \begin{cases}
    \ee^{-\ii (\Sf - \Si) H_b},
    & \text{if } \Si \leqslant \Sf < t, \\[5pt]
    \ee^{-\ii (\Si - \Sf) H_b},
    & \text{if } t \leqslant \Si \leqslant \Sf, \\[5pt]
    \ee^{-\ii (2t - \Si - \Sf) H_b},
    & \text{if } \Si < t \leqslant \Sf.
    \end{cases}
\end{equation}
Note that the observable $O_s$ is inserted into the propagator
$G_s^{(0)}$ to keep the expression in \eqref{eq:mc_U} concise.

\section{Inchworm algorithm} \label{sec:inchworm}
In this section, we are going to study the inchworm algorithm introduced in
\cite{Chen2017a}, where the method was proposed for the spin-Boson model from a
purely diagrammatic point of view. By matching the mathematical interpretation
and the diagrammatic interpretation of the algorithm, we will establish
rigorously the validity of the algorithm in a more general sense. The central
idea of the algorithm is to consider the problem as an evolution problem and
reuse as much previous information as possible. We will start the
introduction of the algorithm by introducing the ``full propagators'', which
are exactly the carriers of the information to be recycled.

\subsection{Full propagator and its Dyson series expansion}
The inchworm algorithm proposed in \cite{Chen2017a} considers the following
``full propagators'':
\begin{equation} \label{eq:G}
G(\Sf, \Si) = 
\begin{cases}
  \tr_b(\rho_b G_b^{(0)}(2t, \Sf) \ee^{-\ii (\Sf - \Si) H} G_b^{(0)}(\Si, 0)),
    & \text{if } \Si \leqslant \Sf < t, \\[5pt]
  \tr_b(\rho_b G_b^{(0)}(2t, \Sf) \ee^{-\ii (\Si - \Sf) H} G_b^{(0)}(\Si, 0)),
    & \text{if } t \leqslant \Si \leqslant \Sf, \\[5pt]
  \tr_b(\rho_b G_b^{(0)}(2t, \Sf) \ee^{\ii (\Sf - t) H} O
    \ee^{-\ii (t - \Si) H} G_b^{(0)}(\Si, 0)),
    & \text{if } \Si < t \leqslant \Sf.
\end{cases}
\end{equation}
Here the trace is taken only on the space of the bath $\mc{H}_b$, and hence
$G(\Sf,\Si)$ is an operator on $\mc{H}_s$. Following the same method from
\eqref{eq:Dyson} to \eqref{eq:G0}, we get the following Dyson series expansion
for $G(\Sf, \Si)$:
\begin{equation} \label{eq:GDyson}
\begin{split}
G(\Sf, \Si) = \sum_{m=0}^{+\infty} \int_{\Sf > s_m > \cdots > s_1 > \Si}
  (-1)^{\#\{\bs < t\}} \ii^m & \tr_b \Big( \rho_b
  G_b^{(0)}(2t, \Sf) G^{(0)}(\Sf, s_m) W G^{(0)}(s_m, s_{m-1}) W \\
& \cdots W G^{(0)}(s_2, s_1) W G^{(0)}(s_1, \Si) G_b^{(0)}(\Si, 0)
  \Big) \,\dd s_1 \cdots \,\dd s_m.
\end{split}
\end{equation}
To apply the inchworm algorithm, we need the following two hypotheses, which are abstracted from the spin-Boson model studied in \cite{Chen2017a}:
\begin{enumerate}
\item[(H1)] \label{item:H1} The initial density matrix for the bath
  $\rho_b$ commutes with the Hamiltonian $H_b$. Physically, this condition holds
  when the bath is initially at the thermal equilibrium associated
  with the Hamiltonian $H_b$.
\item[(H2)] \label{item:H2} There exists a function $B(\cdot, \cdot)$ such that
  the following Wick's theorem holds:
  \begin{equation} \label{eq:Wick}
  \mc{L}_b(s_m, \cdots, s_1) = \left\{ \begin{array}{ll}
    0, & \text{if } m \text{ is odd}, \\[5pt]
    \displaystyle \sum_{\mf{q} \in \mQ(s_m, \cdots, s_1)} \mc{L}(\mf{q}),
      & \text{if } m \text{ is even},
  \end{array} \right.
  \end{equation}
  where the right hand side is given by all possible ordered pairings of the time points:
  \begin{align}
  \label{eq:Lq}
  & \mc{L}(\mf{q}) = \prod_{(\tau_1,\tau_2) \in \mf{q}} B(\tau_1, \tau_2), \\
  \begin{split}
  & \mQ(s_m, \cdots, s_1) =
    \Big\{ \{(s_{j_1}, s_{k_1}), \cdots, (s_{j_{m/2}}, s_{k_{m/2}})\} \,\Big\vert\,
    \{j_1, \cdots, j_{m/2}, k_1, \cdots, k_{m/2}\} = \{1,\cdots,m\}, \\
  & \hspace{220pt} s_{j_l} \leqslant s_{k_l} \text{ for any } l = 1,\cdots,m/2
  \Big\},
  \end{split}
  \end{align}
  When $m = 0$, the value of $\mc{L}(\emptyset)$ is defined as $1$.
\end{enumerate}
In hypothesis (H2), $\mQ(s_m, \cdots, s_1)$ is the set of all possible ordered
pairings of $\{s_m, \cdots, s_1\}$. For example,
\begin{align*}
& \mQ(s_2, s_1) = \bigl\{ \{(s_1,s_2)\} \bigr\}, \\
& \mQ(s_4, s_3, s_2, s_1) = \bigl\{ \{(s_1,s_2), (s_3,s_4)\},
  \{(s_1,s_3),(s_2,s_4)\}, \{(s_1,s_4),(s_2,s_3)\} \bigr\}.
\end{align*}
We can also represent these sets by diagrams:
\newcommand\insdiag[1]{\raisebox{-4pt}{\includegraphics[scale=.6]{#1}}}
\begin{align*}
& \mQ(s_2, s_1) = \left\{ \insdiag{images/PairSet.1} \right\}, \\
& \mQ(s_4, s_3, s_2, s_1) = \left\{ \insdiag{images/PairSet.2}, \quad
  \insdiag{images/PairSet.3}, \quad \insdiag{images/PairSet.4}
\right\}.
\end{align*}
Manifestly, each arc stands for a pair formed by the labels on the two end
points, and each diagram denotes a set of pairs.

As in the quantum field theory, Wick's theorem (hypothesis (H2)) turns
integrals into diagrams, which allows us to use diagrammatic quantum Monte
Carlo methods in the simulation. The first hypothesis (H1)
associates the full propagators with observables. Precisely, \added{when $\Si <
t < \Sf$ and $\Si + \Sf = 2t$, we have
\begin{displaymath}
G_b^{(0)}(\Si,0) \rho_b G_b^{(0)}(2t,\Sf) =
  \ee^{\ii (2t-\Sf) H_b} \rho_b \ee^{-\ii \Si H_b} =
  \rho_b \ee^{\ii (2t-\Sf-\Si) H_b} = \rho_b.
\end{displaymath}
Thus by the cyclic property of the operator trace, we can derive from
\eqref{eq:G} that}%
\begin{equation} \label{eq:obs}
\tr_s(\rho_s G(\Sf, \Si)) =
  \tr(\added{\rho(0)} \ee^{\ii (\Sf - t) H} O \ee^{-\ii (t - \Si) H}) =
  \langle O(t - \Si) \rangle,
\end{equation}
which shows that the evolution of the observable from time $0$ to $t$
can be fully obtained once the propagator $G(\Sf, \Si)$ is solved for every
pair of $\Sf$ and $\Si$. \added{In fact, when working with Wick's theorem (H2),
we usually assume that $\rho_b = \exp(-\beta H_b)$, and thus the hypothesis
(H1) is naturally fulfilled.} By splitting system and bath parts and
applying Wick's theorem \eqref{eq:Wick}, we get from \eqref{eq:GDyson} that
\begin{equation} \label{eq:DysonG}
G(\Sf, \Si) = \sum_{\substack{m=0\\[2pt] m \text{ is even}}}^{+\infty}
   \int_{\Sf > s_m > \cdots > s_1 > \Si}
   \sum_{\mf{q} \in \mQ(\bs)} (-1)^{\#\{\bs < t\}} \ii^m \mc{U}^{(0)}(\Sf, \bs, \Si)
   \mc{L}(\mf{q}) \,\dd s_1 \cdots \,\dd s_m.
\end{equation}
Here the integral of $(-1)^{\#\{\bs < t\}} \ii^m \mc{U}^{(0)}(\Sf, \bs, \Si) \mc{L}(\mf{q})$
can also be represented by a diagram like Figure \ref{fig:Dyson}, which is
interpreted by
\begin{itemize}
\item Each line segment connecting two adjacent time points labeled by
  $t_{\mathrm{s}}$ and $t_{\mathrm{f}}$ means a propagator
  $G_s^{(0)}(t_{\mathrm{f}}, t_{\mathrm{i}})$.
\item Each black dot introduces a perturbation operator $\pm\ii W_s$, and we
  take the minus sign on the forward branch, and the plus sign on the backward
  branch. Here the label for time $t$, which separates the two branches of the
  Keldysh contour, is omitted. Additionally, each black dot also represents the
  integral with respect to the label over the interval from $\Si$ to its next
  label.
\item The arc connecting two time points $t_{\mathrm{s}}$ and $t_{\mathrm{f}}$
  stands for $B(t_{\mathrm{s}}, t_{\mathrm{f}})$.
\end{itemize}
\begin{figure}[!ht]
\centering
\includegraphics[width=.45\textwidth]{images/Dyson.1}
\caption{Diagrammatic representation for the integral of $(-1)^{\#\{\bs < t\}}
\ii^m \mc{U}^{(0)}(\Sf, \bs, \Si) \mc{L}(\mf{q})$ when $m = 6$ and $\mf{q} =
\{(s_1,s_6), (s_2,s_4), (s_3,s_5)\}$.}
\label{fig:Dyson}
\end{figure}
\noindent Note that the branches are not explicitly labeled in Figure
\ref{fig:Dyson}. The two end points $\Si$ and $\Sf$ may both locate on the
forward branch or the backward branch; they may also belong to different
branches. Such a diagrammatic representation allows us to rewrite
\eqref{eq:DysonG} as
\begin{equation} \label{eq:DysonG_diagram}
\includegraphics[width=\textwidth]{images/Dyson.2}
\end{equation}
where the bold line on the left-hand side represents the full propagator
$G(\Sf,\Si)$. The Monte Carlo method based on \eqref{eq:DysonG_diagram} is referred to as ``bare diagrammatic quantum Monte Carlo'' method in \cite{Chen2017a}, which is essentially identical to the Monte Carlo method based on the Dyson series expansion \eqref{eq:Dyson}, except that the bath function $\mc{L}_b(\bs)$ is also evaluated by the Monte Carlo method.

\added{%
\begin{remark}
The hypothesis (H1) aims to establish the relation between the full propagators
and the observables. It can possibly be generalized for quantum Monte Carlo
algorithms in other circumstances. For example, in the Anderson impurity model
considered in \cite{Antipov2017}, the Hamiltonian $H(t)$ depends on time and
the initial state is the entire thermal equilibrium $\rho(0) = \exp(-\beta
H(0))$, where the Keldysh contour needs to be extended by adding the imaginary
Matsubara branch to get \eqref{eq:obs}. Another example is \cite{Ridley2018},
where the initial state includes chemical potentials, which does not exactly
satisfy (H1) but a relation similar to \eqref{eq:obs} still holds.
\end{remark}

\begin{remark} \label{rem:boson}
The hypothesis (H2) is a bosonic Wick's theorem in physical terms. For fermions, a power
of $-1$ needs to be added into the definition of $\mc{L}(\mf{q})$
\eqref{eq:Lq}. While  for definitness of presentation, we will stick to the bosonic representation in this work, the methodology
and analysis in this paper are generalizable to the fermionic case as well. In practice, the algorithms based
on bold diagrams have been successfully applied to the fermionic case such as
the Anderson impurity model \cite{Gull2011, Antipov2017}.
\end{remark}
}

\subsection{Description of the inchworm algorithm}
The inchworm algorithm uses another series expansion of $G(\Sf, \Si)$
that leads to efficient use of results of previous time steps for
future computations on the contour. Before introducing the inchworm series, we need
the following definitions:

\begin{definition}[Linked pairs and linked set of pairs]
Two pairs of real numbers $(s_1, s_2)$ and $(\tau_1, \tau_2)$ satisfying $s_1
\leqslant s_2$ and $\tau_1 \leqslant \tau_2$ are \emph{linked} if either of the
following two statements holds:
\begin{enumerate}
\item $s_1 \leqslant \tau_1 \leqslant s_2$ and $\tau_1 \leqslant s_2 \leqslant \tau_2$.
\item $\tau_1 \leqslant s_1 \leqslant \tau_2$ and $s_1 \leqslant \tau_2 \leqslant s_2$.
\end{enumerate}

For two sets of pairs $\mf{q}_1$ and $\mf{q}_2$, we say the two sets $\mf{q}_1$
and $\mf{q}_2$ are \emph{linked} if there exists $(s_1, s_2) \in \mf{q}_1$ and
$(\tau_1, \tau_2) \in \mf{q}_2$ such that $(s_1, s_2)$ and $(\tau_1, \tau_2)$
are linked. 

Given a set of pairs $\mf{q}$, we say $\mf{q}$ is a \emph{linked set of pairs}
if it cannot be decomposed into the union of two sets of pairs that are not
linked. We define
\begin{displaymath}
\mc{Q}_c(s_m,\cdots,s_1) = \{\mf{q} \in \mc{Q}(s_m,\cdots,s_1) \mid \mf{q} \text{ is linked}\}.
\end{displaymath}
\end{definition}

For example, according to the above definition, when $s_1 < s_2 < s_3 < s_4$,
the pairs $(s_1, s_3)$ and $(s_2, s_4)$ are linked, while $(s_1, s_4)$ and
$(s_2, s_3)$ are not, which can be also clearly seen from the diagrams:
\begin{displaymath}
\insdiag{images/PairSet.3} \quad \insdiag{images/PairSet.4}
\end{displaymath}
It is also clear that the set of pairs in the left diagram is a linked set,
while the other one is not.

\begin{definition}[Linked component decomposition]
  For any $\mf{q} \in \mQ(s_m, \cdots, s_1)$, there exists a collection of its
  disjoint subsets $\mf{q}_1, \mf{q}_2, \cdots, \mf{q}_n$ such that
\begin{enumerate}
\item $\mf{q}_1, \mf{q}_2, \cdots, \mf{q}_n$ are all linked sets of pairs;
\item $\mf{q} = \mf{q}_1 \cup \mf{q}_2 \cup \cdots \cup \mf{q}_n$;
\item $\mf{q}_{n_1}$ and $\mf{q}_{n_2}$ are not linked if $n_1 \neq n_2$.
\end{enumerate}
We call $\mf{q} = \mf{q}_1 \cup \mf{q}_2 \cup \cdots \cup \mf{q}_n$ the
\emph{linked component decomposition} of $\mf{q}$.
\end{definition}
An simple example of the linked component decomposition is given below by
diagrammatic representations:
\begin{displaymath}
\insdiag{images/PairSet.8}
\end{displaymath}

\begin{definition}[Inchworm properness] \label{def:inchworm}
Given a decreasing sequence of real numbers $s_m, \cdots, s_1$ and a real
number $\Sarr$, a set of integer pairs $\mf{q} \in \mQ(s_m, \cdots, s_1)$ is
called inchworm proper, if in its linked component decomposition $\mf{q} =
\mf{q}_1 \cup \mf{q}_2 \cup \cdots \cup \mf{q}_n$, each $\mf{q}_k$ contains at least
one point greater than or equal to $\Sarr$. Below we use $\mQ_{\Sarr}(s_m,
\cdots, s_1)$ to denote the collection of all inchworm proper pair sets.
\end{definition}

The structure of the set $\mQ_{\Sarr}(s_m, \cdots, s_1)$ is determined by the
relative position of $\Sarr$ in the decreasing sequence $s_m, \cdots, s_1$.
For example, if $m = 4$ and $\Sarr \in (s_2, s_3]$, then
\begin{displaymath}
\mQ_{\Sarr}(s_4, s_3, s_2, s_1) = \bigl\{
  \{(s_1,s_3),(s_2,s_4)\}, \{(s_1,s_4),(s_2,s_3)\} \bigr\} =
\left\{ \insdiag{images/PairSet.5}, \quad
  \insdiag{images/PairSet.6} \right\};
\end{displaymath}
if $\Sarr \in (s_3, s_4]$, then $\mQ_{\Sarr}(s_4, s_3, s_2, s_1)$ contains only
one element:
\begin{displaymath}
\mQ_{\Sarr}(s_4, s_3, s_2, s_1) = \bigl\{ \{(s_1,s_3),(s_2,s_4)\} \bigr\} =
\left\{ \insdiag{images/PairSet.7} \right\}.
\end{displaymath}
In fact, if $\Sarr \in (s_{m-1}, s_m]$, the number of sets of pairs in the linked component decomposition of an inchworm proper set $\mf{q}$ must be $1$, which means $\mf{q}$ is linked. Therefore we have
\begin{equation} \label{eq:linked_mQ}
\mQ_{\Sarr}(s_m, \cdots, s_1) = \mQ_c(s_m, \cdots, s_1),
  \qquad \text{if } s_{m-1} < \Sarr \leqslant s_m.
\end{equation}

Based on the definition of inchworm proper set of pairs, we have the following
theorem:
\begin{theorem} \label{thm:inchworm}
Suppose the Dyson series \eqref{eq:DysonG} is absolutely convergent in the
sense that
\begin{equation} \label{eq:abs_convergence}
\sum_{\substack{m=0\\[2pt] m \text{ is even}}}^{+\infty}
   \int_{\Sf > s_m > \cdots > s_1 > \Si} \|\mc{U}(\Sf, \bs, \Si)\|_s
     \left| \sum_{\mf{q} \in \mQ(\bs)} \mc{L}(\mf{q}) \right|
   \,\dd s_1 \cdots \,\dd s_m < +\infty, \quad \forall \Si, \Sf \in [0,2t],
\end{equation}
where $\|\cdot\|_s$ is the operator norm in the Hilbert space $\mc{H}_s$. For
any $\Sarr \in (\Si, \Sf)$, we have
\begin{equation} \label{eq:inchworm}
\begin{split}
G(\Sf, \Si) &= \mc{G}_{\Sarr}(\Sf,\Si) +
\sum_{\substack{m=2\\[2pt] m \text{ is even}}}^{+\infty}
  \int_{\Sf > s_m > \cdots > s_1 > \Si} (-1)^{\#\{\bs < t\}} \ii^m \left(
    \sum_{\mf{q} \in \mQ_{\Sarr}(\bs)} \mathcal{L}(\mf{q})
  \right) \times {} \\
& \mc{G}_{\Sarr}(\Sf, s_m) W_s \mc{G}_{\Sarr}(s_m, s_{m-1}) W_s
  \cdots W_s \mc{G}_{\Sarr}(s_2, s_1) W_s \mc{G}_{\Sarr}(s_1, \Si)
  \,\dd s_1 \cdots \,\dd s_m,
\end{split}
\end{equation}
where
\begin{displaymath}
\mc{G}_{\Sarr}(\Sf, \Si) = \left\{ \begin{array}{ll}
  G(\Sf,\Si), & \text{if } \Si \leqslant \Sf \leqslant \Sarr, \\
  G_s^{(0)}(\Sf,\Si), & \text{if } \Sarr < \Si \leqslant \Sf, \\
  G_s^{(0)}(\Sf,\Sarr) \, G(\Sarr,\Si),
    & \text{if } \Si \leqslant \Sarr < \Sf.
\end{array} \right.
\end{displaymath}
\end{theorem}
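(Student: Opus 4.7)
The plan is to derive the inchworm formula \eqref{eq:inchworm} by rearranging the Dyson expansion \eqref{eq:DysonG} for $G(\Sf, \Si)$ according to how each pairing decomposes relative to the cut time $\Sarr$. For each $\mf{q} \in \mQ(\bs)$ appearing in \eqref{eq:DysonG}, I split the linked-component decomposition of $\mf{q}$ into two groups: the ``inchworm-proper'' components $\mf{q}^{\mathrm{ip}}$ (those with at least one time point $\geq \Sarr$), supported on a subset $\bs^{\mathrm{ip}} \subset \bs$, and the ``remainder'' components $\mf{q}^{\mathrm{r}}$, whose time points all lie strictly below $\Sarr$. By construction $(\bs^{\mathrm{ip}}, \mf{q}^{\mathrm{ip}})$ belongs to $\mQ_{\Sarr}(\bs^{\mathrm{ip}})$ and will play the role of the outer sum in \eqref{eq:inchworm}, while the remainder is to be absorbed into the bold propagators $\mc{G}_{\Sarr}$ on the ``gaps'' between consecutive skeleton vertices.

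The technical heart of the proof is a combinatorial confinement lemma: no remainder pair $(s_a, s_b)$ (with $s_b < \Sarr$) can strictly enclose any skeleton point $s^{\mathrm{ip}}_j$. I expect to prove this by contradiction. Assume $s_a < s^{\mathrm{ip}}_j < s_b$, and consider the inchworm-proper linked component containing $s^{\mathrm{ip}}_j$. First, the pair of $\mf{q}^{\mathrm{ip}}$ incident to $s^{\mathrm{ip}}_j$ must be entirely inside $(s_a, s_b)$, because otherwise the linking condition immediately forces it to be linked to $(s_a, s_b)$. Since the component must reach some endpoint $\geq \Sarr > s_b$, there is a chain of pairwise linked pairs inside the component with a first member whose range exits $(s_a, s_b)$; a direct case check of the two cases in the definition of linked pairs then shows that this ``exiting'' pair is linked to $(s_a, s_b)$, contradicting that $\mf{q}^{\mathrm{ip}}$ and $\mf{q}^{\mathrm{r}}$ are different linked components. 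This confinement lemma, together with the observation that remainder pairs in different gaps have disjoint ranges and hence cannot be linked, is the main obstacle in the argument.

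Once confinement is established, for each fixed skeleton $(\bs^{\mathrm{ip}}, \mf{q}^{\mathrm{ip}})$ the remainder factorizes across gaps. In a gap with endpoints $a < b$ (adjacent elements of $\{\Si\} \cup \bs^{\mathrm{ip}} \cup \{\Sf\}$), all remainder time points must sit in $(a, \min(b, \Sarr))$, and summing over their number, positions, and pairings reproduces, by \eqref{eq:DysonG} read on the sub-interval, precisely $G(\min(b, \Sarr), a)$. Multiplying by $G_s^{(0)}$ on the portion above $\Sarr$ — using the composition identity $G_s^{(0)}(b, \Sarr) G_s^{(0)}(\Sarr, a) = G_s^{(0)}(b, a)$, which one checks case by case from \eqref{eq:G0} — yields exactly $\mc{G}_{\Sarr}(b, a)$ in each of the three cases of its definition. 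The prefactors $\ii^m$, $(-1)^{\#\{\bs<t\}}$ and the bath factor $\mc{L}(\mf{q}) = \mc{L}(\mf{q}^{\mathrm{ip}}) \prod_{\text{gaps}} \mc{L}(\mf{q}^{\mathrm{r}}_{\text{gap}})$ (the latter by Wick's theorem (H2)) all split multiplicatively across gaps, so the reassembled sum is precisely the right-hand side of \eqref{eq:inchworm}. The isolated term $\mc{G}_{\Sarr}(\Sf, \Si)$ corresponds to the empty skeleton and is obtained by applying the same gap argument to the single gap $(\Si, \Sf)$. Finally, the absolute convergence assumption \eqref{eq:abs_convergence} legitimizes the rearrangement of the doubly infinite sum throughout.
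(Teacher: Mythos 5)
Your proposal is correct and follows essentially the same route as the paper: both rearrange the absolutely convergent Dyson series by splitting each pairing's linked-component decomposition into the components reaching above $\Sarr$ (the inchworm-proper skeleton, summed over $\mQ_{\Sarr}$) and the remaining components, which are resummed gap-by-gap into full propagators $G$ below $\Sarr$ and bare propagators $G_s^{(0)}$ above it, yielding $\mc{G}_{\Sarr}$. Your confinement lemma is precisely the combinatorial fact that makes the paper's index map \eqref{eq:map} a well-defined bijection (the paper asserts this with less detail), so if anything your write-up makes the key combinatorial step more explicit.
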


Note that the right-hand side of \eqref{eq:inchworm} has a very similar
structure to \eqref{eq:DysonG}, except that the bare Green's function
$G_s^{(0)}$ is replaced by $\mc{G}_{\Sarr}$ and the sum consists only of
``inchworm proper diagrams'', which will be further discussed below.  This
theorem shows that $G(\Sf,\Si)$ can be evaluated by the Monte Carlo simulation
of the right-hand side of \eqref{eq:inchworm} based on the knowledge of
$G(\tau_2, \tau_1)$ for all $\Si < \tau_1 < \tau_2 < \Sarr$.  Therefore the
algorithm can be designed as follows:
\begin{quote}
\hrule
Choose a time step $\Delta t = t/N$ \\
for $\Sf$ from $0$ to $2t$ with step $\Delta t$ \\
\hspace*{15pt} for $\Si$ from $\Sf$ to $0$ with step $-\Delta t$ \\
\hspace*{30pt} Use \eqref{eq:inchworm} to evaluate $G(\Sf,\Si)$ by the Monte
Carlo method \\
\hspace*{15pt} end for \\
end for \\[-5pt]
\hrule
\end{quote}
This algorithm computes all $G(\Sf, \Si)$ when $\Si$ and $\Sf$ are multiples of
$\Delta t$. When evaluating the right-hand side of \eqref{eq:inchworm},
interpolation might be needed to get $G(\tau_2, \tau_1)$ when $\tau_1$ or
$\tau_2$ is not an integer multiple of $\Delta t$. Details about the
interpolation will be given in Section \ref{sec:num}.  

A rigorous proof of Theorem \ref{thm:inchworm} will be given in the
next section. Here we would like to provide the diagrammatic
understanding of this algorithm, following \cite{Chen2017a,
  Chen2017b}.  As already mentioned, on the right-hand side of
\eqref{eq:inchworm}, only ``inchworm proper diagrams'' appear, for
example,
\renewcommand\insdiag[1]{\raisebox{-12pt}{\includegraphics[scale=.8]{#1}}}
\begin{equation} \label{eq:inchworm_diagram}
\begin{gathered}
\insdiag{images/Inchworm.1}, \qquad
\insdiag{images/Inchworm.2}, \qquad
\insdiag{images/Inchworm.3}, \\
\insdiag{images/Inchworm.4}, \qquad
\insdiag{images/Inchworm.5}, \qquad \cdots
\end{gathered}
\end{equation}
These diagrams are interpreted as follows:
\begin{itemize}
\item Each bold line segment connecting two adjacent time points labeled by
  $t_{\mathrm{i}}$ and $t_{\mathrm{f}}$ means a full propagator
  $G(t_{\mathrm{f}}, t_{\mathrm{i}})$.
\item Each thin line segment connecting two adjacent time points labeled by
  $t_{\mathrm{i}}$ and $t_{\mathrm{f}}$ means a propagator
  $G_s^{(0)}(t_{\mathrm{f}}, t_{\mathrm{i}})$.
\item Each black dot introduces a perturbation operator $\pm\ii W_s$, which
  takes the minus sign on the forward branch and the plus sign on the backward
  branch. It also introduces an integral with respect to the label from $\Sarr$
  to the next label.
\item Each white vertical line introduces a perturbation operator $\pm\ii W_s$,
  which takes the minus sign on the forward branch and the plus sign on the
  backward branch. It also introduces an integral with respect to the label
  from $\Si$ to the next label.
\item The arc connecting two time points $t_{\mathrm{i}}$ and $t_{\mathrm{f}}$
  stands for $B(t_{\mathrm{i}}, t_{\mathrm{f}})$.
\end{itemize}
By \eqref{eq:DysonG_diagram}, we can see that each bold line in the above
diagrams is a sum consisting of infinite ``thin diagrams'', which means that
\eqref{eq:DysonG_diagram} is a partial resummation of the Dyson series
expansion. Therefore it can be expected that \eqref{eq:inchworm} converges
faster than \eqref{eq:DysonG}. It is also worth mentioning that the following
diagrams are not included in the inchworm series
\eqref{eq:inchworm}:
\begin{displaymath}
\insdiag{images/Inchworm.6}, \qquad
\insdiag{images/Inchworm.7}, \qquad
\insdiag{images/Inchworm.8}, \qquad \cdots
\end{displaymath}
since the terms presented by these diagrams have actually appeared in other
diagrams, which are respectively
\begin{displaymath}
\insdiag{images/Inchworm.1}, \qquad
\insdiag{images/Inchworm.9}, \qquad
\insdiag{images/Inchworm.10}, \qquad \cdots
\end{displaymath}
Below we use a diagrammatic equation to summarize the idea of the inchworm algorithm:
\begin{equation} \label{eq:theorem}
\includegraphics[width=\textwidth]{images/Theorem.1}
\end{equation}
The proof of Theorem \ref{thm:inchworm} follows such understanding of
the inchworm series. However, to make sure no diagrams are missed or double-counted in this equation, we need to interpret these diagrammatic equations precisely using mathematical equations, which will be detailed in the following section.

\subsection{Proof of Theorem \ref{thm:inchworm}}
This section is devoted to the proof of \eqref{eq:inchworm}, which is to equate
the right-hand side of \eqref{eq:inchworm} and the series \eqref{eq:DysonG}. To
see this, we will first introduce $\Sarr$ to the expansion of $G(\Sf,\Si)$ by
the following lemma:

\begin{lemma}
When the Dyson series \eqref{eq:DysonG} is absolutely convergent in the sense
of \eqref{eq:abs_convergence}, for any $\Sarr \in (\Si, \Sf)$, it holds that
\begin{equation} \label{eq:DysonArrow}
\begin{split}
G(\Sf, \Si) = G_s^{(0)}(\Sf, \Sarr) G(\Sarr, \Si) +
  \sum_{\substack{m=2\\[2pt] m \text{ is even}}}^{+\infty}
  \sum_{p=0}^{m-1} & \int_{\Sf > s_m > \cdots > s_{p+1} > \Sarr}
  \int_{\Sarr > s_p > \cdots > s_1 > \Si} \sum_{\mf{q} \in \mQ(\bs)} \\
& (-1)^{\#\{\bs < t\}} \ii^m \mc{U}^{(0)}(\Sf, \bs, \Si)
  \mathcal{L}(\mf{q}) \,\dd s_1 \cdots \,\dd s_p \,\dd s_{p+1} \cdots \,\dd s_m.
\end{split}
\end{equation}
\end{lemma}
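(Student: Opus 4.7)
The plan is to take the Dyson series \eqref{eq:DysonG} for $G(\Sf,\Si)$ and split the ordered simplex $\{\Sf > s_m > \cdots > s_1 > \Si\}$ according to how many of the $s_k$ lie below $\Sarr$. Since $\Sarr$ is fixed and the hyperplanes $\{s_k = \Sarr\}$ have measure zero, for each even $m \geqslant 0$ the integration domain decomposes, up to null sets, as
\begin{equation*}
\{\Sf > s_m > \cdots > s_1 > \Si\} = \bigsqcup_{p=0}^{m}
   \{\Sf > s_m > \cdots > s_{p+1} > \Sarr\} \times \{\Sarr > s_p > \cdots > s_1 > \Si\}.
\end{equation*}
The absolute convergence hypothesis \eqref{eq:abs_convergence} justifies interchanging the summation over $m$ with the decomposition, and also justifies separating out the $p = m$ contributions and resumming them independently.

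Next I would identify the collection of terms with $p = m$ (including the $m=0$ term) as the Dyson expansion of $G_s^{(0)}(\Sf,\Sarr)\,G(\Sarr,\Si)$. The key ingredient is the semigroup identity
\begin{equation*}
G_s^{(0)}(\Sf, s_m) = G_s^{(0)}(\Sf, \Sarr)\,G_s^{(0)}(\Sarr, s_m),
\qquad s_m \leqslant \Sarr \leqslant \Sf,
\end{equation*}
which, because $G_s^{(0)}$ is built from commuting exponentials of $H_s$ together with a single insertion of $O_s$ at time $t$, holds in all three cases of the definition of $G_s^{(0)}$ (it must be verified separately according to whether $\Sarr$ lies before, after, or at the observable time $t$, but in each case the insertion is preserved exactly once on the product). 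Applying this identity to the leftmost factor of $\mc{U}^{(0)}(\Sarr,\bs,\Si)$ converts it to $\mc{U}^{(0)}(\Sf,\bs,\Si)$, while $\mc{L}(\mf{q})$ and the sign $(-1)^{\#\{\bs<t\}}$ are intrinsic to $\bs$ and do not change. Summing over even $m$ reproduces the full Dyson series for $G(\Sarr,\Si)$ multiplied on the left by $G_s^{(0)}(\Sf,\Sarr)$.

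The remaining contributions, indexed by $m \geqslant 2$ even and $p \in \{0,\ldots,m-1\}$, are exactly the double sum written on the right-hand side of \eqref{eq:DysonArrow}, with $\bs = (s_m,\ldots,s_1)$ still ranging over the full tuple and $\mf{q} \in \mQ(\bs)$ pairing arbitrary elements across the cut at $\Sarr$. The main technical point to be careful about is the branch-dependent semigroup identity above, since the presence of the observable at time $t$ in the definition of $G_s^{(0)}$ makes the factorization look suspect; the anticipated obstacle is simply the case analysis to confirm it in all three regimes. Once that is in place, bookkeeping of the $p$-partition together with the absolute convergence assumption yields \eqref{eq:DysonArrow} directly.
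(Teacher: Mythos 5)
Your proposal is correct and follows essentially the same route as the paper: decompose the ordered simplex by the number of time points below $\Sarr$ (the paper proves this splitting by induction on $m$, you assert it directly up to null sets, which is equivalent), use absolute convergence to rearrange, and collapse the $p=m$ terms into $G_s^{(0)}(\Sf,\Sarr)\,G(\Sarr,\Si)$ via the factorization $\mc{U}^{(0)}(\Sf,\bs,\Si)=G_s^{(0)}(\Sf,\Sarr)\,\mc{U}^{(0)}(\Sarr,\bs,\Si)$ for $s_m<\Sarr$. Your flagged case analysis for the semigroup identity across the observable insertion at time $t$ does check out in all three regimes, so there is no gap.
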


\begin{proof}
For any positive integer $m$, it holds that
\begin{equation} \label{eq:int_fact}
\int_{\Sf > s_m > \cdots > s_1 > \Si} \varphi(\bs) \,\dd s_1 \cdots \,\dd s_m
  = \sum_{p=0}^m \int_{\Sf > s_m > \cdots > s_{p+1} > \Sarr}
    \int_{\Sarr > s_p > \cdots > s_1 > \Si} \varphi(\bs)
      \,\dd s_1 \cdots \,\dd s_p \,\dd s_{p+1} \cdots \,\dd s_m
\end{equation}
for any function $\varphi$. This can be proven by mathematical induction since
when the above equation holds for some $m$, we have
\begin{displaymath}
\begin{split}
& \int_{\Sf > s_{m+1} > \cdots > s_1 > \Si}
  \varphi(\bs) \,\dd s_1 \cdots \,\dd s_{m+1} =
\int_{\Si}^{\Sf} \left( \int_{s_{m+1} > s_m > \cdots > s_1 > \Si}
  \varphi(\bs) \,\dd s_1 \cdots \,\dd s_m \right) \dd s_{m+1} \\
={} & \int_{\Si}^{\Sarr} \left( \int_{s_{m+1} > s_m > \cdots > s_1 > \Si}
  \varphi(\bs) \,\dd s_1 \cdots \,\dd s_m \right) \dd s_{m+1} +
\int_{\Sarr}^{\Sf} \left( \int_{s_{m+1} > s_m > \cdots > s_1 > \Si}
  \varphi(\bs) \,\dd s_1 \cdots \,\dd s_m \right) \dd s_{m+1} \\
={} & \int_{\Sarr > s_{m+1} > \cdots > s_1 > \Si}
  \varphi(\bs) \,\dd s_1 \cdots \,\dd s_{m+1} \\
& + \int_{\Sarr}^{\Sf} \left( \sum_{p=0}^m
  \int_{s_{m+1} > s_m > \cdots > s_{p+1} > \Sarr}
  \int_{\Sarr > s_p > \cdots > s_1 > \Si}
  \varphi(\bs) \,\dd s_1 \cdots \,\dd s_p \,\dd s_{p+1} \cdots \,\dd s_m \right)
  \dd s_{m+1} \\
={} & \int_{\Sarr > s_{m+1} > \cdots > s_1 > \Si}
  \varphi(\bs) \,\dd s_1 \cdots \,\dd s_{m+1} \\
& + \sum_{p=0}^m \int_{\Sf > s_{m+1} > \cdots > s_{p+1} > \Sarr}
  \int_{\Sarr > s_p > \cdots > s_1 > \Si}
  \varphi(\bs) \,\dd s_1 \cdots \,\dd s_p \,\dd s_{p+1} \cdots \,\dd s_{m+1} \\
={} & \sum_{p=0}^{m+1} \int_{\Sf > s_{m+1} > \cdots > s_{p+1} > \Sarr}
  \int_{\Sarr > s_p > \cdots > s_1 > \Si}
  \varphi(\bs) \,\dd s_1 \cdots \,\dd s_p \,\dd s_{p+1} \cdots \,\dd s_{m+1},
\end{split}
\end{displaymath}
and it is obvious that \eqref{eq:int_fact} holds for $m=1$. By
\eqref{eq:int_fact}, we can rewrite \eqref{eq:DysonG} as
\begin{equation} \label{eq:Gsplit}
\begin{split}
G(\Sf, \Si) &= \sum_{\substack{m=0\\[2pt] m \text{ is even}}}^{+\infty}
  \sum_{p=0}^m \int_{\Sf > s_m > \cdots > s_{p+1} > \Sarr}
  \int_{\Sarr > s_p > \cdots > s_1 > \Si} \sum_{\mf{q} \in \mQ(\bs)} \\
& \hspace{60pt}(-1)^{\#\{\bs < t\}} \ii^m \mc{U}^{(0)}(\Sf, \bs, \Si)
  \mathcal{L}(\mf{q}) \,\dd s_1 \cdots \,\dd s_p
  \,\dd s_{p+1} \cdots \,\dd s_m \\
&= \sum_{\substack{m=0\\[2pt] m \text{ is even}}}^{+\infty}
    \int_{\Sarr > s_m > \cdots > s_1 > \Si} \sum_{\mf{q} \in \mQ(\bs)}
      (-1)^{\#\{\bs < t\}} \ii^m \mc{U}^{(0)}(\Sf, \bs, \Si)
      \mathcal{L}(\mf{q}) \,\dd s_1 \cdots \,\dd s_m \\
& \quad +\sum_{\substack{m=2\\[2pt] m \text{ is even}}}^{+\infty}
  \sum_{p=0}^{m-1} \int_{\Sf > s_m > \cdots > s_{p+1} > \Sarr}
  \int_{\Sarr > s_p > \cdots > s_1 > \Si} \sum_{\mf{q} \in \mQ(\bs)} \\
& \hspace{60pt}(-1)^{\#\{\bs < t\}} \ii^m \mc{U}^{(0)}(\Sf, \bs, \Si)
  \mathcal{L}(\mf{q}) \,\dd s_1 \cdots \,\dd s_p \,\dd s_{p+1} \cdots \,\dd s_m,
\end{split}
\end{equation}
where we have used the absolute convergence \eqref{eq:abs_convergence} to
ensure the validity of the second equality. When $s_m < \Sarr$, we have
$\mc{U}^{(0)}(\Sf, \bs, \Si) = G_s^{(0)}(\Sf, \Sarr) \mc{U}^{(0)}(\Sarr, \bs, \Si)$. Hence
\begin{displaymath}
\begin{split}
& \sum_{\substack{m=0\\[2pt] m \text{ is even}}}^{+\infty}
  \int_{\Sarr > s_m > \cdots > s_1 > \Si} \sum_{\mf{q} \in \mQ(\bs)}
    (-1)^{\#\{\bs < t\}} \ii^m \mc{U}^{(0)}(\Sf, \bs, \Si)
    \mathcal{L}(\mf{q}) \,\dd s_1 \cdots \,\dd s_m \\
={} & G_s^{(0)}(\Sf, \Sarr)
  \sum_{\substack{m=0\\[2pt] m \text{ is even}}}^{+\infty}
  \int_{\Sarr > s_m > \cdots > s_1 > \Si} \sum_{\mf{q} \in \mQ(\bs)}
    (-1)^{\#\{\bs < t\}} \ii^m \mc{U}^{(0)}(\Sarr, \bs, \Si)
    \mathcal{L}(\mf{q}) \,\dd s_1 \cdots \,\dd s_m
= G_s^{(0)}(\Sf,\Sarr) G(\Sarr, \Si).
\end{split}
\end{displaymath}
Inserting this equation into \eqref{eq:Gsplit} yields our conclusion
\eqref{eq:DysonArrow}.
\end{proof}

By the above lemma, we have extracted the first diagram in \eqref{eq:theorem} from the definition of the bold line \eqref{eq:DysonG_diagram}. Our next step is to introduce the inchworm proper pairings to the series expansion. The basic idea is to decompose every set of pairs into the union of one inchworm proper subset and several other unlinked subsets. The result reads
\begin{lemma} \label{lem:Dyson}
When the Dyson series \eqref{eq:DysonG} is absolutely convergent in the sense
of \eqref{eq:abs_convergence}, for any $\Sarr \in (\Si, \Sf)$, it holds that
\begin{equation} \label{eq:full_exp}
\begin{split}
G(\Sf,\Si) &= G_s^{(0)}(\Sf, \Sarr) G(\Sarr, \Si) +
  \sum_{\substack{\tilde{m}=2\\[2pt] \tilde{m} \text{ is even}}}^{+\infty}
  \sum_{\tilde{p}=0}^{\tilde{m}-1}
  \int_{\Sf > \tilde{s}_{\tilde{m}} > \cdots > \tilde{s}_{\tilde{p}+1} > \Sarr}
  \int_{\Sarr > \tilde{s}_{\tilde{p}} > \cdots > \tilde{s}_1 > \Si} \\
& \qquad \sum_{\substack{p=0\\[1pt] \tilde{p}-p \text{ is even}}}^{\tilde{p}}
  \sum_{\substack{n_p=0\\[1pt] n_p \text{ is even}}}^{\tilde{p}-p}
  \sum_{\substack{n_{p-1}=0\\[1pt] n_{p-1} \text{ is even}}}^{\tilde{p}-p-n_p}
  \cdots
  \sum_{\substack{n_1=0\\[1pt] n_1 \text{ is even}}}^{\tilde{p}-p-n_p-\cdots-n_2}
  \sum_{\mf{q} \in \mQ_{\Sarr}(\bs)} \sum_{\mf{q}_p \in \mQ(\bs^{(p)})}
    \cdots \sum_{\mf{q}_0 \in \mQ(\bs^{(0)})} \qquad \\
& \omit \hfill $\qquad (-1)^{\#\{\tilde{\bs} < t\}} \ii^{\tilde{m}}
  \mc{U}^{(0)}(\Sf, \tilde{\bs}, \Si) \mc{L}(\tilde{\mf{q}})
  \,\dd \tilde{s}_1 \cdots \,\dd \tilde{s}_{\tilde{p}}
  \,\dd \tilde{s}_{\tilde{p}+1} \cdots \,\dd \tilde{s}_{\tilde{m}}$,
\end{split}
\end{equation}
where the notations are
\begin{itemize}
\item $\tilde{\bs} = (\tilde{s}_{\tilde{m}}, \cdots, \tilde{s}_1)$;
\item $\tilde{\mf{q}} = \mf{q} \cup \mf{q}_p \cup \cdots \cup \mf{q}_0$;
\item $\bs, \bs^{(p)}, \cdots, \bs^{(0)}$ are all subsequences of
  $\tilde{\bs}$, defined by
  \begin{displaymath}
  \tilde{\bs} = (s_m, \cdots, s_{p+1}, \bs^{(p)}, s_p,
    \bs^{(p-1)}, s_{p-1}, \cdots, \bs^{(1)}, s_1, \bs^{(0)}),
  \end{displaymath}
  where $m = \tilde{m} + p - \tilde{p}$ and $\bs = (s_m, \cdots, s_1)$.
\end{itemize}
\end{lemma}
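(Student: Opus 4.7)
The plan is to take \eqref{eq:DysonArrow} and regroup its inner sum $\sum_{\mf{q} \in \mQ(\bs)}$ by classifying each pairing according to its linked component decomposition with respect to $\Sarr$. The leading term $G_s^{(0)}(\Sf, \Sarr) G(\Sarr, \Si)$ on the right-hand side of \eqref{eq:full_exp} is carried over verbatim from \eqref{eq:DysonArrow}, so all the real work is in rewriting the remaining double series into the nested form required by \eqref{eq:full_exp}.

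First, I would fix the integration variables $\tilde{\bs} = (\tilde{s}_{\tilde m}, \ldots, \tilde{s}_1)$ with $\tilde p$ of them below $\Sarr$, and consider a pairing $\tilde{\mf{q}} \in \mQ(\tilde{\bs})$ with linked component decomposition $\tilde{\mf{q}} = \mf{q}^{(1)} \cup \cdots \cup \mf{q}^{(n)}$. I would let $\mf{q}$ be the union of those components that touch at least one time $\geqslant \Sarr$ and let $\mf{r}$ be the union of the remaining components. By construction, if $\bs$ is the subsequence of $\tilde{\bs}$ hit by $\mf{q}$, then $\mf{q} \in \mQ_{\Sarr}(\bs)$, while $\mf{r}$ is an (otherwise arbitrary) pairing of the remaining points, all of which lie strictly in $(\Si, \Sarr)$.

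The key step would be to show that no pair of $\mf{r}$ can straddle any point of $\bs$. Assuming for contradiction that $(\tau_1, \tau_2) \in \mf{r}$ and $s \in \bs$ satisfy $\tau_1 \leqslant s \leqslant \tau_2 < \Sarr$, I would let $C$ denote the linked component of $s$ in $\tilde{\mf{q}}$; since $C \subset \mf{q}$ is inchworm proper, $C$ contains some time $u \geqslant \Sarr > \tau_2$. Because $C$ and the component of $(\tau_1, \tau_2)$ are distinct, no pair of $C$ may interleave $(\tau_1, \tau_2)$, and checking the two-clause definition of linked pairs forces every pair of $C$ to be either \emph{inside} $[\tau_1, \tau_2]$, entirely on one side of it, or \emph{encompassing} it. A short case analysis then shows that no ``inside'' pair can be directly linked to any ``outside'' or ``encompassing'' pair, so $C$ would split into disjoint linked sub-collections --- contradicting its connectedness, since $C$ already contains both an inside pair (through $s$) and an outside or encompassing pair (through $u$).

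With this straddling lemma in hand, the final step would be to observe that the outer times $s_1 < \cdots < s_p$ of $\bs$ lying below $\Sarr$ split $\tilde{\bs}$ uniquely into $(s_m, \ldots, s_{p+1}, \bs^{(p)}, s_p, \ldots, \bs^{(1)}, s_1, \bs^{(0)})$ and simultaneously split $\mf{r}$ as $\mf{q}_p \cup \cdots \cup \mf{q}_0$ with $\mf{q}_j \in \mQ(\bs^{(j)})$ an arbitrary pairing of each gap, each $n_j := |\bs^{(j)}|$ necessarily even by Wick's theorem (which also yields $m = \tilde m + p - \tilde p$). The assignment $\tilde{\mf{q}} \mapsto (p, n_0, \ldots, n_p, \mf{q}, \mf{q}_0, \ldots, \mf{q}_p)$ is then a bijection between $\mQ(\tilde{\bs})$ and the admissible tuples in \eqref{eq:full_exp}, and the factorization $\mc{L}(\tilde{\mf{q}}) = \mc{L}(\mf{q}) \prod_{j=0}^{p} \mc{L}(\mf{q}_j)$ from \eqref{eq:Lq} cleanly splits the integrand, while $(-1)^{\#\{\tilde{\bs} < t\}}$, $\ii^{\tilde m}$, and $\mc{U}^{(0)}(\Sf, \tilde{\bs}, \Si)$ depend only on $\tilde{\bs}$ and are left untouched. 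Absolute convergence \eqref{eq:abs_convergence} would justify the rearrangement. I expect the combinatorial claim above to be the main obstacle: although it is intuitively clear from the diagrams, the argument has to be phrased so as never to invoke a ``chain-walking'' step beyond what the purely local linking definition supports.
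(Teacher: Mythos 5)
Your proposal is correct and follows essentially the same route as the paper: both arguments regroup the double series in \eqref{eq:DysonArrow} via the bijection $\tilde{\mf{q}} \leftrightarrow (p,n_p,\dots,n_1,\mf{q},\mf{q}_p,\dots,\mf{q}_0)$, where $\mf{q}$ is the union of the linked components of $\tilde{\mf{q}}$ touching a time $\geqslant \Sarr$, and invoke absolute convergence to justify the rearrangement. The one point worth noting is that your ``straddling lemma'' --- that no pair outside $\mf{q}$ can enclose a point of $\bs$, proved by partitioning the component $C$ into pairs inside $[\tau_1,\tau_2]$ and pairs not inside and showing these two nonempty sub-collections cannot be linked --- is exactly the step the paper leaves implicit when it asserts that the tuple built from \eqref{eq:pqn} is a genuine inverse image, so your write-up is if anything more complete on the key combinatorial fact.
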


\begin{proof}
By comparing \eqref{eq:full_exp} with \eqref{eq:DysonArrow}, we construct the
following map for given $\tilde{m}$, $\tilde{p}$ and a decreasing sequence
$\tilde{\bs}$:
\begin{equation} \label{eq:map}
\begin{array}{clc}
\mc{P} & \rightarrow & \mQ(\tilde{\bs}) \\
(p,n_p,\cdots,n_1,\mf{q},\mf{q}_p,\cdots,\mf{q}_0) &
  \mapsto & \mf{q} \cup \mf{q}_p \cup \cdots \cup \mf{q}_0
\end{array}
\end{equation}
where
\begin{displaymath}
\begin{split}
\mc{P} = \{(p,n_p,\cdots,n_1,\mf{q},\mf{q}_p,\cdots,\mf{q}_0) \mid {}
  & p \in \{\tilde{p},\tilde{p}-2,\cdots,\tilde{p}-2\lfloor \tilde{p}/2 \rfloor\};
    \ n_1 + \cdots + n_p \leqslant \tilde{p}-p; \\
  & n_p,\cdots,n_1 \text{ are even}; \ \mf{q} \in \mQ_{\Sarr}(\bs);
    \ \mf{q}_p \in \mQ(\bs^{(p)}); \ \cdots; \ \mf{q}_0 \in \mQ(\bs^{(0)}) \}.
\end{split}
\end{displaymath}
The subsequences $\bs, \bs^{(p)}, \cdots, \bs^{(0)}$ are defined as in the
lemma. If \eqref{eq:map} is a bijection, then \eqref{eq:full_exp} is a
rearrangement of the series \eqref{eq:DysonArrow}. The equality
\eqref{eq:full_exp} then follows by the absolute convergence of the Dyson
series.

Now it remains only to show that \eqref{eq:map} is one-to-one. The definition
of $\mc{P}$ shows that in a map $(p,n_p,\cdots,n_1,\mf{q},\mf{q}_p,\cdots,
\mf{q}_0) \mapsto \tilde{\mf{q}}$, if $\mf{q}$ and $\tilde{\mf{q}}$ are given,
other parameters are automatically determined by
\begin{equation} \label{eq:pqn}
\begin{aligned}
& p = 2\abs{\mf{q}} - (\tilde{m} - \tilde{p}), \qquad
  \mf{q}_0 = \{(\tau_1, \tau_2) \in \tilde{\mf{q}}
  \mid \tau_1 < \tau_2 < s_1\}, \\
& \mf{q}_k = \{(\tau_1, \tau_2) \in \tilde{\mf{q}}
  \mid s_k < \tau_1 < \tau_2 < s_{k+1}\}, \qquad k = 1,\cdots,p, \\
& n_k = 2\abs{\mf{q}_k}, \qquad k = 1,\cdots,p.
\end{aligned}
\end{equation}
Here $2\abs{\mf{q}}$ ($2\abs{\mf{q}_k}$) is actually the number of time points
in $\mf{q}$ ($\mf{q}_k$). Meanwhile, the set of pairs $\mf{q}$ satisfies
\begin{enumerate}
\item $\mf{q}$ is not linked to any other pair in $\tilde{\mf{q}}$;
\item $\mf{q} \in \mQ_{\Sarr}(\bs)$ and therefore contains
  $\tilde{s}_{\tilde{p}+1}, \cdots, \tilde{s}_{\tilde{m}}$.
\end{enumerate}
Now we consider an arbitrary set of pairs $\tilde{\mf{q}} \in \mQ(\tilde{\bs})$
with its linked component decomposition being $\tilde{\mf{q}} =
\tilde{\mf{q}}_1 \cup \cdots \cup \tilde{\mf{q}}_n$. The only subset of
$\tilde{\mf{q}}$ satisfying the above two conditions is
\begin{displaymath}
\mf{q} = \bigcup\:\{\tilde{\mf{q}}_k \mid \tilde{\mf{q}}_k
  \text{ contains a pair } (\tau_1, \tau_2) \text{ with } \tau_2 >
  \tilde{s}_{\tilde{p}}\}.
\end{displaymath}
Taking such a $\mf{q}$ and applying \eqref{eq:pqn} to find other parameters, we
obtain an inverse image of $\tilde{\mf{q}}$. This inverse image is unique due
to the uniqueness of $\mf{q}$, which yields that \eqref{eq:map} is a bijection.
\end{proof}

\begin{lemma} \label{lem:inchworm}
When the Dyson series \eqref{eq:DysonG} is absolutely convergent in the sense
of \eqref{eq:abs_convergence}, for any $\Sarr \in (\Si, \Sf)$, it holds that
\begin{equation} \label{eq:iwG}
\begin{split}
G(\Sf,\Si) &= G^{(0)}(\Sf, \Sarr) G(\Sarr, \Si) +
  \sum_{\substack{m=2\\[2pt] m \text{ is even}}}^{+\infty}
  \sum_{p=0}^{m-1}
  \sum_{\substack{n_p=0\\[1pt] n_p \text{ is even}}}^{+\infty} \cdots
  \sum_{\substack{n_0=0\\[1pt] n_0 \text{ is even}}}^{+\infty}
  \int_{\Sf > \tilde{s}_{\tilde{m}} > \cdots > \tilde{s}_{\tilde{p}+1} > \Sarr}
  \int_{\Sarr > \tilde{s}_{\tilde{p}} > \cdots > \tilde{s}_1 > \Si} \\
& \qquad \sum_{\mf{q} \in \mQ_{\Sarr}(\bs)}
  \sum_{\mf{q}_p \in \mQ(\bs^{(p)})} \cdots \sum_{\mf{q}_0 \in \mQ(\bs^{(0)})}
  (-1)^{\#\{\tilde{\bs} < t\}} \ii^{\tilde{m}}
  \mc{U}^{(0)}(\Sf, \tilde{\bs}, \Si) \mathcal{L}(\tilde{\mf{q}})
  \,\dd \tilde{s}_1 \cdots \,\dd \tilde{s}_{\tilde{p}}
  \,\dd \tilde{s}_{\tilde{p}+1} \cdots \,\dd \tilde{s}_{\tilde{m}},
\end{split}
\end{equation}
where $\tilde{m} = m + n_0 + \cdots + n_p$, and other parameters are defined 
in the same way as in Lemma \ref{lem:Dyson}.
\end{lemma}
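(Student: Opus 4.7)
My plan is to deduce \eqref{eq:iwG} from \eqref{eq:full_exp} of Lemma \ref{lem:Dyson} by a pure re-indexing of the multi-indexed sum, with the rearrangement legitimized by the absolute convergence hypothesis \eqref{eq:abs_convergence}. The key observation is that $n_0 := \tilde{p} - p - (n_1 + \cdots + n_p)$ is already implicit in \eqref{eq:full_exp}: it equals $|\bs^{(0)}|$, and is necessarily a non-negative even integer (since $\mf{q}_0 \in \mQ(\bs^{(0)})$ must be a complete pairing of $\bs^{(0)}$). The idea is to promote $n_0$ to an independent summation index and, in exchange, absorb the outer sums over $\tilde{m}$ and $\tilde{p}$.

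Concretely, I would introduce the change of variables
\[
(\tilde{m}, \tilde{p}, p, n_p, \ldots, n_1) \longmapsto (m, p, n_0, n_1, \ldots, n_p),
\]
defined by $\tilde{m} = m + n_0 + n_1 + \cdots + n_p$ and $\tilde{p} = p + n_0 + n_1 + \cdots + n_p$, with inverse $m = \tilde{m} + p - \tilde{p}$ and $n_0 = \tilde{p} - p - (n_1 + \cdots + n_p)$. The next task is to verify that this is a bijection between the two constrained index sets: the old parity conditions that $\tilde{m}$ is even and $\tilde{p} - p$ is even, combined with $n_1, \ldots, n_p$ even, collapse to the single statement that $m$ and each $n_k$ are even; the old bounds $\tilde{m} \geq 2$ and $\tilde{p} \leq \tilde{m} - 1$ translate to $m - p \geq 1$, i.e., $m \geq 2$ and $p \in \{0, 1, \ldots, m-1\}$; and $n_0 \geq 0$ is exactly the old constraint $n_1 + \cdots + n_p \leq \tilde{p} - p$.

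Once the bijection is established, the summand of \eqref{eq:full_exp} --- the integral over $\tilde{\bs}$ of $(-1)^{\#\{\tilde{\bs}<t\}} \ii^{\tilde{m}} \mc{U}^{(0)}(\Sf, \tilde{\bs}, \Si) \mc{L}(\tilde{\mf{q}})$ together with the pairing sums over $\mf{q}, \mf{q}_p, \ldots, \mf{q}_0$ --- depends only on $\tilde{\bs}$, $\tilde{\mf{q}}$, and on $(m, p, n_0, \ldots, n_p)$ through $\tilde{m}$ and $\tilde{p}$. Hence the relabeling preserves each term exactly, and the interchange of summation orders is justified by \eqref{eq:abs_convergence} in the same manner as at the end of the proof of Lemma \ref{lem:Dyson}. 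Folding the rearranged series together with the $G^{(0)}(\Sf, \Sarr) G(\Sarr, \Si)$ term already isolated in \eqref{eq:full_exp} then yields \eqref{eq:iwG}.

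The main obstacle will be purely combinatorial bookkeeping: checking that the parity constraints and the integration region over $\tilde{\bs}$ (with $\tilde{p}$ points below $\Sarr$ and $\tilde{m} - \tilde{p}$ above) match exactly between the two parametrizations, and confirming that no term is double-counted or omitted. No additional analytic estimate is required beyond what is already used in Lemma \ref{lem:Dyson}.
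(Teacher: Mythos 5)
Your proposal is correct and follows essentially the same route as the paper: the paper's proof of Lemma \ref{lem:inchworm} consists precisely of exhibiting the bijection $(m,p,n_p,\cdots,n_0)\mapsto(\tilde{m},\tilde{p},p,n_p,\cdots,n_1)$ with $\tilde{m}=m+n_p+\cdots+n_0$, $\tilde{p}=p+n_p+\cdots+n_0$ (the inverse of the map you describe) and then invoking absolute convergence to justify the rearrangement of \eqref{eq:full_exp}. Your bookkeeping of how the parity and bound constraints translate between the two index sets matches the paper's definitions of $\mc{P}$ and $\wt{\mc{P}}$.
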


\begin{proof}
It is not difficult to see that the map
\begin{displaymath}
\begin{array}{ccc}
\mc{P} & \rightarrow & \wt{\mc{P}} \\
(m,p,n_p,\cdots,n_0) & \mapsto & (\tilde{m}, \tilde{p}, p, n_p, \cdots, n_1)
\end{array}
\end{displaymath}
with $\tilde{m} = m+n_p+\cdots+n_0$ and $\tilde{p} = p+n_p+\cdots+n_0$ is a
bijection. Here
\begin{equation}
\begin{split}
\mc{P} &= \Big\{(m,p,n_p,\cdots,n_0) \,\Big\vert\, m,n_p,\cdots,n_0
  \text{ are positive and even; } p \in \{0,\cdots,m-1\} \Big\}, \\
\wt{\mc{P}} &= \Big\{(\tilde{m},\tilde{p},p,n_p,\cdots,n_1)
  \,\Big\vert\, \tilde{m},n_p,\cdots,n_1 \text{ are positive and even;} \\
& \hspace{112pt} \tilde{p} \in \{0,\cdots,\tilde{m}-1\}; \,
  p \in \{0,2,\cdots,2\lfloor \tilde{p}/2 \rfloor\}; \,
  p + n_p + \cdots + n_1 \leqslant \tilde{p} \Big\}.
\end{split}
\end{equation}
Hence \eqref{eq:iwG} is a rearrangement of the series \eqref{eq:full_exp}, and
therefore \eqref{eq:iwG} follows by the absolute convergence of the Dyson
series.
\end{proof}

Now we are ready to carry out the proof of the inchworm series
\eqref{eq:inchworm}:
\begin{proof}[Proof of Theorem \ref{thm:inchworm}]
Following the notations in Lemma \ref{lem:Dyson} and \ref{lem:inchworm}, we
have the following identities:
\begin{gather*}
\mc{U}^{(0)}(\bs) = \mc{U}^{(0)}(\Sf, s_m, \cdots, s_{p+1}, \Sarr)
  \mc{U}^{(0)}(\Sarr, \bs^{(p)}, s_p) W_s
    \mc{U}^{(0)}(s_p, \bs^{(p-1)}, s_{p-1}) W_s
    \cdots W_s \mc{U}^{(0)}(s_1, \bs^{(0)}, \Si), \\
\#\{\tilde{\bs} < t\}
  = \#\{\bs < t\} + \#\{\bs^{(p)} < t\} + \cdots + \#\{\bs^{(0)} < t\},
\qquad \mc{L}(\tilde{\mf{q}})
  = \mc{L}(\mf{q}) \mc{L}(\mf{q}_p) \cdots \mc{L}(\mf{q}_0).
\end{gather*}
Substituting these equalities to \eqref{eq:iwG} yields
\begin{equation} \label{eq:InchwormG}
\begin{split}
&G(\Sf, \Si) = G_s^{(0)}(\Sf,\Sarr) G(\Sarr,\Si)
+ \sum_{\substack{m=2\\[2pt] m \text{ is even}}}^{+\infty} \sum_{p=0}^{m-1}
  \sum_{\substack{n_p=0\\[1pt] n_p \text{ is even}}}^{+\infty} \cdots
  \sum_{\substack{n_0=0\\[1pt] n_0 \text{ is even}}}^{+\infty} \\
& \qquad \int_{\Sarr}^{\Sf} \int_{\Sarr}^{s_m} \cdots \int_{\Sarr}^{s_{p+2}}
  \int_{\Si}^{\Sarr} \int_{\Si}^{s_p} \cdots \int_{\Si}^{s_2}
  \int_{\bs^{(p)} \subset (s_p, \Sarr)}
  \int_{\bs^{(p-1)} \subset (s_{p-1}, s_p)} \cdots
  \int_{\bs^{(1)} \subset (s_1,s_2)}
  \int_{\bs^{(0)} \subset (\Si, s_1)} \\
& \qquad \sum_{\mf{q} \in \mQ_{\Sarr}(\bs)}
  \sum_{\mf{q}_p \in \mQ(\bs^{(p)})} \cdots \sum_{\mf{q}_0 \in \mQ(\bs^{(0)})}
    (-1)^{\#\{\bs < t\} + \#\{\bs^{(p)} < t\}
    + \cdots + \#\{\bs^{(0)} < t\}} \ii^{m+n_p+\cdots+n_0} \times{}\\
& \qquad \quad \mc{U}^{(0)}(\Sf, s_m, \cdots, s_{p+1}, \Sarr)
  \mc{U}^{(0)}(\Sarr, \bs^{(p)}, s_p) W_s
    \mc{U}^{(0)}(s_p, \bs^{(p-1)}, s_{p-1}) W_s
    \cdots W_s \mc{U}^{(0)}(s_1, \bs^{(0)}, \Si) \times {} \\
& \qquad \qquad \mc{L}(\mf{q}) \mc{L}(\mf{q}_p) \cdots \mc{L}(\mf{q}_0)
  \,\dd \bs^{(0)} \,\dd\bs^{(1)}
  \cdots \,\dd\bs^{(p-1)} \,\dd\bs^{(p)}
  \,\dd s_1 \cdots \,\dd s_{p-1} \,\dd s_p 
  \,\dd s_{p+1} \cdots \,\dd s_{m-1} \,\dd s_m,
\end{split}
\end{equation}
where the integrals with respect to $\bs^{(k)}$ are interpreted by
\begin{equation}
\int_{\bs^{(k)} \subset (a,b)} \dd \bs^{(k)} =
  \int_{b > s_{n_k}^{(k)} > \cdots > s_1^{(k)} > a}
    \,\dd s_1^{(k)} \cdots \,\dd s_{n_k}^{(k)}.
\end{equation}
Hence the integrals in the second line of \eqref{eq:InchwormG} is actually the
same as the integrals in \eqref{eq:iwG}. Due to the absolute convergence, we
are allowed to safely interchange sums and integrals, which can give us the
following factors:
\begin{equation} \label{eq:G_k}
\begin{aligned}
& \sum_{n_p = 0}^{+\infty} \int_{\bs^{(p)} \subset (s_p, \Sarr)}
  \sum_{\mf{q}_p \in \mQ(\bs^{(p)})} (-1)^{\#\{\bs^{(p)} < t\}} \ii^{n_k}
  \mc{U}^{(0)}(\Sarr, \bs^{(p)}, s_p) \mc{L}(\mf{q}_p) \,\dd \bs^{(p)}, \\
& \sum_{n_k = 0}^{+\infty} \int_{\bs^{(k)} \subset (s_k, s_{k+1})}
  \sum_{\mf{q}_k \in \mQ(\bs^{(k)})} (-1)^{\#\{\bs^{(k)} < t\}} \ii^{n_k}
  \mc{U}^{(0)}(s_{k+1}, \bs^{(k)}, s_k) \mc{L}(\mf{q}_k) \,\dd \bs^{(k)},
\qquad k = 0,\cdots,p-1.
\end{aligned}
\end{equation}
This quantity can be replaced, respectively, by $G(s_p, \Sarr)$ and $G(s_{k+1}, s_k)$, $k=0,\cdots,p-1$ according to \eqref{eq:DysonG}. Such replacement turns \eqref{eq:InchwormG} into
\begin{equation} \label{eq:inchworm1}
\begin{split}
& G(\Sf, \Si) = G_s^{(0)}(\Sf,\Sarr) G(\Sarr,\Si)
+ \sum_{\substack{m=2\\[2pt] m \text{ is even}}}^{+\infty}
  \sum_{p=0}^{m-1} \int_{\Sf > s_m > \cdots > s_{p+1} > \Sarr}
  \int_{\Sarr > s_p > \cdots > s_1 > \Si} \sum_{\mf{q} \in \mQ_{\Sarr}(\bs)}
  (-1)^{\#\{\bs < t\}} \ii^m \times {} \\
& \quad \mc{U}^{(0)}(\Sf, s_m, \cdots, s_{p+1}, \Sarr)
  G(\Sarr, s_p) W_s G(s_p, s_{p-1}) W_s
    \cdots W_s G(s_2, s_1) W_s G(s_1, \Si) \mc{L}(\mf{q})
  \,\dd s_1 \cdots \,\dd s_p \,\dd s_{p+1} \cdots \,\dd s_m.
\end{split}
\end{equation}
Noting that
\begin{displaymath}
\begin{split}
& \mc{U}^{(0)}(\Sf, s_m, \cdots, s_{p+1}, \Sarr) G(\Sarr, s_p) W_s G(s_p, s_{p-1}) W_s
  \cdots W_s G(s_2, s_1) W_s G(s_1, \Si) \\
={} & \mc{G}_{\Sarr}(\Sf, s_m) W_s \mc{G}_{\Sarr}(s_m, s_{m-1}) W_s \cdots
  W_s \mc{G}_{\Sarr}(s_2, s_1) W_s \mc{G}_{\Sarr}(s_1, \Si),
\end{split}
\end{displaymath}
we see that the integrand in \eqref{eq:inchworm1} is actually the same as the
integrand in \eqref{eq:inchworm}. Furthermore, the sum over $p$ in
\eqref{eq:inchworm1} can be replaced by $\sum_{p=0}^m$, because when $p = m$,
the set $\mQ_{\Sarr}(\bs)$ is empty. By doing this, we can apply the identity
\eqref{eq:int_fact} and obtain the equality \eqref{eq:inchworm}, which
completes the proof of Theorem \ref{thm:inchworm}.
\end{proof}

\added{%
\begin{remark}
In Theorem \ref{thm:inchworm}, when defining the absolute convergence, the
absolute value is outside the sum of $\mf{q}$ since $\mQ(\bs)$ is a finite set.
As mentioned in Remark \ref{rem:boson}, when the inchworm algorithm is applied
to fermions, due to the switching sign in $\mc{L}(\mf{q})$, the value of the
sum over $\mf{q}$ might become smaller due to cancellation; this would yield possibly improved convergence
in bold diagrammatic Monte Carlo methods for fermionic systems.
\end{remark}
}

\section{Integro-differential equations} \label{sec:id}
To better understand the algorithm, we are going to derive the limiting equation
of the inchworm algorithm by considering the case in which $\Sf - \Sarr$ is
infinitesimal. Suppose $\Delta t = \Sf - \Sarr$ and both $\Sf$ and $\Sarr$ are on 
the forward branch of the Keldysh contour, i.e. $\Sarr < \Sf < t$. By 
\eqref{eq:G0} and \eqref{eq:mc_U}, we can find that
\begin{equation} \label{eq:dt_exp}
\begin{gathered}
G_s^{(0)}(\Sf, \Sarr) = \ee^{-\ii \Delta t H_s} =
  I - \ii \Delta t H_s + O(\Delta t^2), \\
\mc{U}^{(0)}(\Sf, s_m, \Sarr) = 
  G_s^{(0)}(\Sf, s_m) W_s G_s^{(0)}(s_m, \Sarr) =
  W_s + O(\Delta t).
\end{gathered}
\end{equation}
In the inchworm method \eqref{eq:inchworm1}, the domain of the integral
\begin{displaymath}
\int_{\Sf > s_m > \cdots > s_{p+1} > \Sarr}
  \text{(integrand)} \,\dd s_{p+1} \cdots \,\dd s_{m-1} \,\dd s_m
\end{displaymath}
has volume $\frac{1}{(m-p)!}\Delta t^{m-p}$. Therefore all the terms with $p <
m-1$ are higher order in $\Delta t$, and \eqref{eq:inchworm1} can be rewritten
as
\begin{equation} \label{eq:linear_G}
\begin{split}
G(\Sf, \Si) &= G_s^{(0)}(\Sf,\Sarr) G(\Sarr,\Si)
+ \sum_{\substack{m=2\\[2pt] m \text{ is even}}}^{+\infty}
  \ii^m \int_{\Sarr}^{\Sf}
  \int_{\Sarr > s_{m-1} > \cdots > s_1 > \Si} \sum_{\mf{q} \in \mQ_{\Sarr}(\bs)}
  (-1)^{\#\{\bs < t\}} \mc{L}(\mf{q}) \times {} \\
& \qquad \mc{U}^{(0)}(\Sf, s_m, \Sarr) \mc{U}(\Sarr, s_{m-1}, \cdots, s_1, \Si)
  \,\dd s_1 \cdots \,\dd s_{m-1} \,\dd s_m + O(\Delta t^2) \\
&= (I - \ii \Delta t H_s) G(\Sarr,\Si)
  + \sum_{\substack{m=2\\[2pt] m \text{ is even}}}^{+\infty}
    \ii^m \int_{\Sarr}^{\Sf} \int_{\Sarr > s_{m-1} > \cdots > s_1 > \Si}
    \sum_{\mf{q} \in \mQ_{\Sarr}(\bs)}
    (-1)^{\#\{\bs < t\}} \mc{L}(\mf{q}) \times {} \\
& \qquad W_s \mc{U}(\Sarr, s_{m-1}, \cdots, s_1, \Si)
  \,\dd s_1 \cdots \,\dd s_{m-1} \,\dd s_m + O(\Delta t^2),
\end{split}
\end{equation}
where we have used the short hand
\begin{displaymath}
\mc{U}(\Sarr, s_{m-1}, \cdots, s_1, \Si) = G(\Sarr, s_{m-1}) W_s
  G(s_{m-1}, s_{m-2}) W_s \cdots W_s G(s_2, s_1) W_s G(s_1, \Si).
\end{displaymath}
Our assumption that $\Sarr < \Sf < t$ shows that $\#\{\bs < t\} = m$. Also, by
\eqref{eq:mc_L} and \eqref{eq:linked_mQ}, one sees that
\begin{displaymath}
\sum_{\mf{q} \in \mQ_{\Sarr}(\bs)} \mc{L}(\mf{q}) =
  \sum_{\mf{q} \in \mQ_{\Sarr}(\Sarr, s_{m-1}, \cdots, s_1)} \mc{L}(\mf{q})
  + O(\Delta t) =
  \sum_{\mf{q} \in \mQ_c(\Sarr, s_{m-1}, \cdots, s_1)} \mc{L}(\mf{q})
  + O(\Delta t).
\end{displaymath}
Thus on the right-hand side of \eqref{eq:linear_G}, the first-order term of the
integrand is actually independent of $s_m$, which can then be integrated out.
We write the result by moving $G(\Sarr,\Si)$ to the left-hand side and divide
both sides by $\Delta t$:
\begin{equation}
\begin{split}
\frac{G(\Sf,\Si) - G(\Sarr,\Si)}{\Delta t} &= -\ii H_s G(\Sarr,\Si)
- \sum_{\substack{m=2\\[2pt] m \text{ is even}}}^{+\infty}
  \ii^m \int_{\Sarr > s_{m-1} > \cdots > s_1 > \Si} \\
& \sum_{\mf{q} \in \mQ_c(\Sarr, s_{m-1}, \cdots, s_1)}
  (-1)^{m-1} \mathcal{L}(\mf{q}) W_s \mc{U}(\Sarr, s_{m-1}, \cdots, s_1, \Si)
  \,\dd s_1 \cdots \,\dd s_{m-1} + O(\Delta t).
\end{split}
\end{equation}
Taking the limit as $\Delta t \rightarrow 0$ and renaming $m$ to $m+1$, we
obtain the integro-differential equations for the propagator
$G(\cdot,\cdot)$:
\begin{equation} \label{eq:forward}
\begin{split}
\frac{\partial G(\Sarr, \Si)}{\partial \Sarr} &= -\ii H_s G(\Sarr,\Si) \\
& \quad - \sum_{\substack{m=1\\[2pt] m \text{ is odd}}}^{+\infty}
  \ii^{m+1} \int_{\Sarr > s_m > \cdots > s_1 > \Si}
  \sum_{\mf{q} \in \mQ_c(\Sarr, \bs)}
  (-1)^{\#\{\bs < t\}} \mc{L}(\mf{q})
  W_s \mc{U}(\Sarr, \bs, \Si) \,\dd s_1 \cdots \,\dd s_m.
\end{split}
\end{equation}
Here we have again used the fact that all the components of $\bs$ are less than
$t$. The equation \eqref{eq:forward} gives an integro-differential equation for
the full propagator $G(\Sarr,\Si)$ when $\Sarr < t$. If $\Sarr > t$, we can use
the same method to derive a similar integro-differential equation:
\begin{equation} \label{eq:backward}
\begin{split}
\frac{\partial G(\Sarr, \Si)}{\partial \Sarr} &= \ii H_s G(\Sarr,\Si) \\
& + \sum_{\substack{m=1\\[2pt] m \text{ is odd}}}^{+\infty}
  \ii^{m+1} \int_{\Sarr > s_m > \cdots > s_1 > \Si}
  \sum_{\mf{q} \in \mQ_c(\Sarr, \bs)}
  (-1)^{\#\{\bs < t\}} \mc{L}(\mf{q}) W_s \mc{U}(\Sarr, \bs, \Si)
  \,\dd s_1 \cdots \,\dd s_m.
\end{split}
\end{equation}
When $\Sarr = t$, one can see from \eqref{eq:G} that $G(\cdot, \Si)$ is
discontinuous, and it satisfies
\begin{equation} \label{eq:discontinuity}
\lim_{\Sarr \rightarrow t^+} G(\Sarr, \Si) = 
  O_s \lim_{\Sarr \rightarrow t^-} G(\Sarr, \Si).
\end{equation}
Combing \eqref{eq:forward} and \eqref{eq:backward}, we get the following
theorem:
\begin{theorem} \label{thm:integro-differential}
When the Dyson series \eqref{eq:DysonG} is absolutely convergent in the sense
of \eqref{eq:abs_convergence}, the full propagator $G(\cdot,\cdot)$ satisfies
the integro-differential equation
\begin{equation} \label{eq:id_eq}
\begin{split}
& \sgn(\Sarr-t) \frac{\partial G(\Sarr, \Si)}{\partial \Sarr} =
  \ii H_s G(\Sarr,\Si) + \sum_{\substack{m=1\\[2pt] m \text{ is odd}}}^{+\infty}
  \ii^{m+1} \int_{\Sarr > s_m > \cdots > s_1 > \Si} \\
& \qquad \sum_{\mf{q} \in \mQ_c(\Sarr, \bs)}
  (-1)^{\#\{\bs < t\}} \mathcal{L}(\mf{q}) W_s \mc{U}(\Sarr, \bs, \Si)
  \,\dd s_1 \cdots \,\dd s_m, \qquad \forall \Si \in [0,2t]\backslash \{t\},
  \quad \Sarr \in [\Si,2t]\backslash\{t\},
\end{split}
\end{equation}
with the jump condition \eqref{eq:discontinuity} and the ``initial condition''
$G(\Sarr,\Sarr) = \Id$.
\end{theorem}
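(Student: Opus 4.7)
The plan is to derive the integro-differential equation \eqref{eq:id_eq} directly from the inchworm identity \eqref{eq:inchworm1} by taking an infinitesimal step $\Delta t = \Sf - \Sarr$ and examining the $O(\Delta t)$ terms. Specifically, I would fix $\Si$ and $\Sarr$, set $\Sf = \Sarr + \Delta t$, and substitute into \eqref{eq:inchworm1}. Because the integration domain $\{\Sf > s_m > \cdots > s_{p+1} > \Sarr\}$ has volume $\Delta t^{m-p}/(m-p)!$, only the term with $p = m-1$ contributes at order $\Delta t$; all terms with $p < m-1$ are $O(\Delta t^2)$. This is the first key reduction, and it lets us collapse the free integration in $s_m \in (\Sarr, \Sf)$ into a single factor of $\Delta t$.

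Next, I would Taylor-expand the short-time propagators using \eqref{eq:dt_exp}. For $\Sarr < \Sf < t$ (both points on the forward branch), we have $G_s^{(0)}(\Sf,\Sarr) = \Id - \ii \Delta t H_s + O(\Delta t^2)$ and $\mc{U}^{(0)}(\Sf, s_m, \Sarr) = W_s + O(\Delta t)$. Since the leading integrand becomes independent of $s_m$, the outer integral over $s_m$ produces the factor $\Delta t$; dividing by $\Delta t$ and sending $\Delta t \to 0$ yields \eqref{eq:forward}. One also needs the observation that when $s_{m-1} < \Sarr \leqslant s_m$, the inchworm proper set $\mQ_{\Sarr}(\Sarr, s_{m-1},\ldots,s_1)$ coincides with the linked set $\mQ_c$ by \eqref{eq:linked_mQ}, which is exactly what produces the sum over $\mQ_c(\Sarr, \bs)$ appearing in the final equation. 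A parallel calculation for $\Sarr > \Sf > t$ (both points on the backward branch), where the sign of the exponent in $G_s^{(0)}$ flips, gives \eqref{eq:backward}. Combining the two cases with the signum prefactor $\sgn(\Sarr - t)$ produces \eqref{eq:id_eq}.

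The jump condition \eqref{eq:discontinuity} follows directly from the definition \eqref{eq:G}: as $\Sf$ crosses $t$ from the left the third line of the definition inserts an extra factor of $O = O_s \otimes \Id_b$, while the bath propagators and unitary evolution $\ee^{\pm\ii(\Sf-\Si)H}$ remain continuous at $\Sf = t$. Factoring $O_s$ out of the system trace gives the stated relation. The initial condition $G(\Sarr,\Sarr) = \Id$ is immediate from \eqref{eq:G}: at $\Sf = \Si$ all three cases collapse to $\tr_b(\rho_b G_b^{(0)}(2t,\Si) G_b^{(0)}(\Si,0)) = \tr_b(\rho_b) \Id_s = \Id_s$ by hypothesis (H1).

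The main obstacle will be the careful justification of passing $\Delta t \to 0$ inside the infinite sum over $m$ in \eqref{eq:inchworm1}; this is where the absolute convergence hypothesis \eqref{eq:abs_convergence} is essential, since it provides a dominating bound uniform in $\Delta t$ and allows interchanging the limit with the sum and with the integrations over $s_1,\ldots,s_{m-1}$. A secondary technical point is verifying that the $O(\Delta t)$ remainder from replacing $\mQ_{\Sarr}(\bs)$ by $\mQ_c(\Sarr, s_{m-1}, \ldots, s_1)$ and from expanding $\mc{U}^{(0)}(\Sf, s_m, \Sarr)$ indeed collects into a $o(1)$ term after division by $\Delta t$; this is again a straightforward consequence of \eqref{eq:abs_convergence}, but it must be written out with some care to keep track of the combinatorial factors.
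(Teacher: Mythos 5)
Your proposal is a rigorization of the paper's own \emph{heuristic} derivation (the passage from \eqref{eq:inchworm1} through \eqref{eq:linear_G} to \eqref{eq:forward}--\eqref{eq:backward}), which the paper explicitly sets aside as non-rigorous: it states that ``the infinitesimal terms $O(\Delta t)$ or $O(\Delta t^2)$ have not been rigorously verified'' and then proves the theorem by an entirely different route. The paper differentiates the definition \eqref{eq:G} directly in $\Sarr$, which produces $-\ii H_s G - \ii W_s \tr_b(\cdots W_b\, \ee^{-\ii(\Sarr-\Si)H} \cdots)$ exactly, with no limit over an infinite series; it then Dyson-expands the remaining propagator and applies Wick's theorem to get \eqref{eq:id_eq_Dyson}, and finally identifies this with the right-hand side of \eqref{eq:id_eq} by expanding each full propagator $G$ there into its Dyson series and exhibiting a combinatorial bijection (the $\tilde p = \tilde m - 1$ special case of the map \eqref{eq:map}). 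The only analytic input in the paper's proof is rearrangement of absolutely convergent series, which is exactly what hypothesis \eqref{eq:abs_convergence} licenses.

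Your route can in principle be completed, but the step you defer to the end is the actual crux, and as written it is a gap rather than a technicality. The hypothesis \eqref{eq:abs_convergence} bounds the \emph{bare} Dyson series; the series you need to control uniformly in $\Delta t$ is the inchworm series \eqref{eq:inchworm1}, whose terms contain products of full propagators $G$ and sums over $\mQ_{\Sarr}(\bs)$ of cardinality up to $(m-1)!!$. To invoke dominated convergence you must first extract from \eqref{eq:abs_convergence} a bound on $\sup\|G\|_s$, then establish absolute summability of $\sum_m \sum_{p\le m-2}(\text{volume})\cdot(m-1)!!\cdot\sup|\mc{L}|\cdot\|W_s\|_s^m\|G\|_s^{m+1}$ with the volume factor supplying the uniform $O(\Delta t^2)$, and separately control the $O(\Delta t)$ errors from replacing $\mc{U}^{(0)}(\Sf,s_m,\Sarr)$ by $W_s$ and $\mQ_c(s_m,\ldots,s_1)$ by $\mQ_c(\Sarr,s_{m-1},\ldots,s_1)$ (the latter needs continuity of $B$, which is assumed nowhere explicitly in (H2)). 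One also only obtains a one-sided derivative this way and must argue separately for the other side. None of this is insurmountable, and your identification of \eqref{eq:linked_mQ} as the source of the $\mQ_c$ sum, together with your treatment of the jump and initial conditions, is correct; but the estimates you label ``a straightforward consequence'' are precisely the content that distinguishes a proof from the paper's admittedly formal derivation, and they are not carried out in your proposal.
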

Although the integro-differential equation \eqref{eq:id_eq} has been derived
from the inchworm method, the infinitesimal terms $O(\Delta t)$ or $O(\Delta
t^2)$ have not been rigorously verified. Below we are going to provide a
rigorous proof of Theorem \ref{thm:integro-differential} starting from the 
definition of $G(\cdot,\cdot)$.

\begin{proof}[Proof of Theorem \ref{thm:integro-differential}]
We will start the proof by deriving the dynamics of the propagator with a more
straightforward method, and then show that the result is equivalent to
\eqref{eq:id_eq}. Again we consider the case $\Sarr < t$, and take derivative
of the definition of $G$ \eqref{eq:G} to get
\begin{equation} \label{eq:diff}
\begin{split}
\frac{\partial G(\Sarr, \Si)}{\partial \Sarr} &=
  \tr_b(\rho_b G_b^{(0)}(2t, \Sarr) (\ii H_b - \ii H)
    \ee^{-\ii (\Sarr - \Si) H} G_b^{(0)}(\Si, 0)) \\
&= -\ii \tr_b(\rho_b G_b^{(0)}(2t, \Sf) (H_s + W)
    \ee^{-\ii (\Sarr - \Si) H} G_b^{(0)}(\Si, 0)) \\
&= -\ii H_s G(\Sarr, \Si) -\ii W_s \tr_b(\rho_b G_b^{(0)}(2t, \Sarr) W_b
    \ee^{-\ii (\Sarr - \Si) H} G_b^{(0)}(\Si, 0)).
\end{split}
\end{equation}
The propagator $\ee^{-\ii (\Sarr - \Si) H}$ can be expanded into Dyson series
as \eqref{eq:Dyson}, which turns \eqref{eq:diff} into
\begin{equation}
\begin{split}
& \frac{\partial G(\Sarr, \Si)}{\partial \Sarr} =
  -\ii H_s G(\Sarr, \Si) - \sum_{m=0}^{+\infty} \ii^{m+1}
    \int_{\Sarr > s_m > \cdots > s_1 > \Si} \\
& \quad (-1)^m W_s G_s^{(0)}(\Sarr, s_m) W_s G_s^{(0)}(s_m, s_{m-1}) W_s
    \cdots W_s G_s^{(0)}(s_2, s_1) W_s G_s^{(0)}(s_1, \Si) \times {} \\
& \quad \tr_b(\rho_b G_b^{(0)}(2t, \Sarr) W_b G_b^{(0)}(\Sarr, s_m) W_b
    G_b^{(0)}(s_m, s_{m-1}) W_b \cdots W_b G_b^{(0)}(s_2, s_1)
    W_b G_b^{(0)}(s_1, \Si) G_b^{(0)}(\Si, 0)) \\
& \omit \hfill $\,\dd s_1 \cdots \,\dd s_m.$
\end{split}
\end{equation}
Since $G_b^{(0)}(s_1, \Si) G_b^{(0)}(\Si, 0) = G_b^{(0)}(s_1, 0)$, we
can use the definitions \eqref{eq:mc_L} and \eqref{eq:mc_U} to simplify the
above equation:
\begin{equation} \label{eq:id_eq_Dyson}
\begin{split}
&\frac{\partial G(\Sarr, \Si)}{\partial \Sarr} =
  -\ii H_s G(\Sarr, \Si) - \sum_{m=0}^{+\infty} \ii^{m+1}
    \int_{\Sarr > s_m > \cdots > s_1 > \Si}
      (-1)^m W_s \mc{U}^{(0)}(\Sarr, \bs, \Si)
      \mathcal{L}(\Sarr, \bs) \,\dd s_1 \cdots \,\dd s_m \\
&\quad = -\ii H_s G(\Sarr, \Si) -
  \sum_{\substack{m=1\\[2pt] m\text{ is odd}}}^{+\infty} \ii^{m+1}
    \int_{\Sarr > s_m > \cdots > s_1 > \Si} \sum_{\mf{q} \in \mQ_c(\bs)}
    (-1)^{\#\{\bs < t\}} W_s \mc{U}^{(0)}(\Sarr, \bs, \Si)
    \mathcal{L}(\mf{q}) \,\dd s_1 \cdots \,\dd s_m.
\end{split}
\end{equation}
To see that the above equation is identical to the ``inchworm equation''
\eqref{eq:forward}, we can mimic \eqref{eq:InchwormG} and replace the
propagators $G(\cdot,\cdot)$ in the integral of \eqref{eq:forward} by its
Dyson series expansion. Following the same way as in the previous section, we
obtain a result similar to \eqref{eq:full_exp}:
\begin{equation} \label{eq:dG_dSarr}
\begin{split}
\frac{\partial G(\Sarr, \Si)}{\partial \Sarr} &= -\ii H_s G(\Sarr,\Si)
- \sum_{\substack{m=1\\[2pt] m \text{ is odd}}}^{+\infty}
  \sum_{\substack{n_m=0\\[2pt] n_m \text{ is even}}}^{+\infty} \cdots
  \sum_{\substack{n_0=0\\[2pt] n_0 \text{ is even}}}^{+\infty}
  \ii^{\tilde{m}+1}
  \int_{\Sarr > \tilde{s}_{\tilde{m}} > \cdots > \tilde{s}_1 > \Si} \\
& \quad \sum_{\mf{q} \in \mQ_c(\Sarr, \bs)}
  \sum_{\mf{q}_m \in \mQ(\bs^{(m)})} \cdots \sum_{\mf{q}_0 \in \mQ(\bs^{(0)})}
  (-1)^{\#\{\tilde{\bs} < t\}} W_s
  \mc{U}^{(0)}(\Sarr, \tilde{\bs}, \Si) \mc{L}(\tilde{\mf{q}})
  \,\dd \tilde{s}_1 \cdots \,\dd \tilde{s}_{\tilde{m}},
\end{split}
\end{equation}
where
\begin{displaymath}
\tilde{\mf{q}} = \mf{q} \cup \mf{q}_m \cup \cdots \cup \mf{q}_0, \qquad
\tilde{m} = m + n_m + \cdots + n_0, \qquad
\tilde{\bs} = (\tilde{s}_{\tilde{m}}, \cdots, \tilde{s}_1),
\end{displaymath}
and $\bs, \bs^{(m)}, \cdots, \bs^{(0)}$ can be determined by
\begin{displaymath}
\tilde{\bs} =
  (\bs^{(m)}, s_m, \bs^{(m-1)}, s_{m-1}, \cdots, \bs^{(1)}, s_1, \bs^{(0)}),
\qquad \bs = (s_m, \cdots, s_1).
\end{displaymath}
Now we need to use the following equivalence of sums:
\begin{displaymath}
\sum_{\substack{m=1\\[2pt] m \text{ is odd}}}^{+\infty}
  \sum_{\substack{n_m=0\\[2pt] n_m \text{ is even}}}^{+\infty} \cdots
  \sum_{\substack{n_0=0\\[2pt] n_0 \text{ is even}}}^{+\infty} =
\sum_{\substack{\tilde{m}=1\\[2pt] \tilde{m} \text{ is odd}}}^{+\infty}
  \sum_{\substack{m=0\\[2pt] \tilde{m}-m \text{ is even}}}^{\tilde{m}}
  \sum_{\substack{n_m=0\\[2pt] n_m \text{ is even}}}^{\tilde{m}-m} \cdots
  \sum_{\substack{n_{m-1}=0\\[2pt] n_{m-1} \text{ is even}}}^{\tilde{m}-m-n_m} \cdots
  \sum_{\substack{n_1=0\\[2pt] n_1 \text{ is even}}}^{\tilde{m}-m-n_m-\cdots-n_2}.
\end{displaymath}
Due to the absolute convergence of the Dyson series, the equation
\eqref{eq:dG_dSarr} becomes
\begin{equation}
\begin{split}
\frac{\partial G(\Sarr, \Si)}{\partial \Sarr} &= -\ii H_s G(\Sarr,\Si)
- \sum_{\substack{\tilde{m}=1\\[2pt] \tilde{m} \text{ is odd}}}^{+\infty}
  \ii^{\tilde{m}+1}
  \int_{\Sarr > \tilde{s}_{\tilde{m}} > \cdots > \tilde{s}_1 > \Si} \\
& \quad \sum_{\substack{m=1\\[2pt] \tilde{m}-m \text{ is even}}}^{\tilde{m}}
  \sum_{\substack{n_m=0\\[2pt] n_m \text{ is even}}}^{\tilde{m}-m}
  \sum_{\substack{n_{m-1}=0\\[2pt] n_{m-1} \text{ is even}}}^{\tilde{m}-m-n_m} \cdots
  \sum_{\substack{n_1=0\\[2pt] n_1 \text{ is even}}}^{\tilde{m}-m-n_m-\cdots-n_2}
  \sum_{\mf{q} \in \mQ_c(\Sarr, \bs)}
  \sum_{\mf{q}_m \in \mQ(\bs^{(m)})} \cdots \sum_{\mf{q}_0 \in \mQ(\bs^{(0)})} \\
& \omit \hfill $(-1)^{\#\{\tilde{\bs} < t\}} W_s
  \mc{U}^{(0)}(\Sarr, \tilde{\bs}, \Si) \mc{L}(\tilde{\mf{q}})
  \,\dd \tilde{s}_1 \cdots \,\dd \tilde{s}_{\tilde{m}}.$
\end{split}
\end{equation}
The last step is to verify that the right-hand side of the above equation
equals the right-hand side of \eqref{eq:id_eq_Dyson}. By comparison, we just
need to show that for any given positive odd integer $\tilde{m}$ and a sequence
$\tilde{\bs} = (s_{\tilde{m}}, \cdots, s_1)$ satisfying $\Sarr > s_{\tilde{m}}
> \cdots > s_1 > \Si$, the following map is a bijection:
\begin{equation}
\begin{array}{clc}
\wt{\mc{P}} & \rightarrow & \mQ_{\Sarr}(\tilde{\bs}) \\
(m,n_m,\cdots,n_0,\mf{q},\mf{q}_p,\cdots,\mf{q}_0) &
  \mapsto & \mf{q} \cup \mf{q}_p \cup \cdots \cup \mf{q}_0
\end{array}
\end{equation}
where
\begin{displaymath}
\begin{split}
\wt{\mc{P}} = \{(m,n_m,\cdots,n_0,\mf{q},\mf{q}_p,\cdots,\mf{q}_0) \mid {}
  & p \in \{\tilde{m},\tilde{m}-2,\cdots,1\};
    \ n_0 + \cdots + n_m = \tilde{m}-m; \ n_m,\cdots,n_0 \text{ are even}; \\
  & \mf{q} \in \mQ_c(\Sarr,\bs);
    \ \mf{q}_m \in \mQ(\bs^{(p)}); \ \cdots; \ \mf{q}_0 \in \mQ(\bs^{(0)}) \}.
\end{split}
\end{displaymath}
This map can actually be considered as a special case of \eqref{eq:map} when
$\tilde{p} = \tilde{m}-1$, and hence is one-to-one. Till now, the inchworm
equation \eqref{eq:id_eq} is confirmed to be identical to \eqref{eq:diff} when
$\Sarr < t$, which describes the correct dynamics of the full propagator. The
case $\Sarr > t$ can be dealt with following the same procedure, which is
omitted for conciseness.
\end{proof}

Since the equation in the above theorem is firstly derived by setting
$\Sf-\Sarr$ to be infinitesimal in the inchworm algorithm, the algorithm can
actually be considered as an iterative scheme for the equation. From this point
of view, we can improve the numerical method by solving \eqref{eq:id_eq} with a
higher-order scheme. Before introducing the details of the method, we will
first present the spin-boson model, and show that it satisfies all the
conditions needed for the inchworm method.

\added{%
\begin{remark}
The sum of integrals on the right-hand side of Eq.~\eqref{eq:id_eq}, can be
rewritten as
\begin{displaymath}
\int_{\Si}^{\Sarr} \Sigma(\Sarr, s_1) G(s_1, \Si) \,\mathrm{d}s_1
\end{displaymath}
with a properly defined $\Sigma(\Sarr, s_1)$, which makes the equation
\eqref{eq:id_eq} similar to the Kadanoff-Baym equations
\cite{Kadanoff1962}. It would be interesting to establish deeper
connections between these models, which we will leave for future works.
\end{remark}
}

\section{Spin-boson model} \label{sec:spin-boson}
To demonstrate the algorithm in a specific model, we consider the spin-boson
model in which the system is a single spin and the bath is given by a large
number of harmonic oscillators. In detail, we have
\begin{displaymath}
\mc{H}_s = \lspan\{ \ket{1}, \ket{2} \}, \qquad
\mc{H}_b = \bigotimes_{l=1}^L \left( L^2(\mathbb{R}^3) \right),
\end{displaymath}
where $L$ is the number of harmonic oscillators. The corresponding Hamiltonians
are
\begin{displaymath}
H_s = \epsilon \hat{\sigma}_z + \Delta \hat{\sigma}_x, \qquad
H_b = \sum_{l=1}^L \frac{1}{2} (\hat{p}_l^2 + \omega_l^2 \hat{q}_l^2).
\end{displaymath}
The notations are described as follows:
\begin{itemize}
\item $\epsilon$: energy difference between two spin states.
\item $\Delta$: frequency of the spin flipping.
\item $\hat{\sigma}_x, \hat{\sigma}_z$: Pauli matrices satisfying
  $\hat{\sigma}_x \ket{1} = \ket{2}$, $\hat{\sigma}_x \ket{2} = \ket{1}$,
  $\hat{\sigma}_z \ket{1} = \ket{1}$, $\hat{\sigma}_z \ket{2} = -\ket{2}$.
\item $\omega_l$: frequency of the $l$th harmonic oscillator.
\item $\hat{q}_l$: position operator for the $l$th harmonic oscillator defined
  by $\psi(q_1, \cdots, q_L) \mapsto q_l \psi(q_1, \cdots, q_L)$.
\item $\hat{p}_l$: momentum operator for the $l$th harmonic oscillator defined
  by $\psi(q_1, \cdots, q_L) \mapsto -\ii \nabla_{q_l} \psi(q_1, \cdots, q_L)$.
\end{itemize}
The coupling between system and bath is assumed to be linear:
\begin{displaymath}
W = W_s \otimes W_b, \qquad
  W_s = \hat{\sigma}_z, \qquad W_b = \sum_{l=1}^L c_l \hat{q}_l,
\end{displaymath}
where $c_l$ is the coupling intensity between the $l$th harmonic oscillator and
the spin. Suppose the initial state of the bath is in the thermal equilibrium
with inverse temperature $\beta$, \textit{i.e.} $\rho_b = Z^{-1} \exp(-\beta
H_b)$, and $Z$ is chosen such that $\tr(\rho_b) = 1$. Thus the hypothesis (H1)
is fulfilled, and Wick's theorem \eqref{eq:Wick} holds for
\begin{equation} \label{eq:B}
B(\tau_1, \tau_2) = \sum_{l=1}^L \frac{c_l^2}{2\omega_l} \left[
  \coth \left( \frac{\beta \omega_l}{2} \right) \cos \omega_l (\tau_2 - \tau_1)
  - \ii \sin \omega_l(\tau_2 - \tau_1)
\right].
\end{equation}

In order to apply the inchworm algorithm to the spin-boson model, we need to
show the absolute convergence \eqref{eq:abs_convergence}. By the definition of
$\mc{U}$ \eqref{eq:mc_U}, we immediately have
\begin{displaymath}
\| \mc{U}(\Sf,\bs,\Si) \|_s \leqslant \|W_s\|_s^m \max\{\|O_s\|_s, 1\}
  = \|\hat{\sigma}_z\|_s^m \max\{\|O_s\|_s,1\} = \max\{\|O_s\|_s,1\}.
\end{displaymath}
And from \eqref{eq:B}, we see that
\begin{equation} \label{eq:B_bound}
|B(\tau_1, \tau_2)| \leqslant \sum_{l=1}^L \frac{c_l^2}{2\omega_l}
  \sqrt{ \coth^2 \left( \frac{\beta \omega_l}{2} \right)
    \cos^2 \omega_l (\tau_2 - \tau_1) + \sin^2 \omega_l (\tau_2 - \tau_1)}
\leqslant \sum_{l=1}^L \frac{c_l^2}{2\omega_l}
  \coth \left( \frac{\beta \omega_l}{2} \right).
\end{equation}
Let $C_b$ be the right-hand side of the above inequality. Then when $m$ is even,
\begin{displaymath}
|\mc{L}(\mf{q})| \leqslant C_b^{m/2},
  \qquad \forall \mf{q} \in \mQ(s_m, \cdots, s_1).
\end{displaymath}
Since the number of pair sets in $\mQ(s_m, \cdots, s_1)$ is $(m-1)!!$, we have
\begin{equation} \label{eq:abs_convergence_spin_Boson}
\begin{split}
& \sum_{\substack{m=0\\[2pt] m \text{ is even}}}^{+\infty}
   \int_{\Sf > s_m > \cdots > s_1 > \Si} \|\mc{U}^{(0)}(\Sf, \bs, \Si)\|_s
     \left| \sum_{\mf{q} \in \mQ(\bs)} \mc{L}(\mf{q}) \right|
   \,\dd s_1 \cdots \,\dd s_m \\
\leqslant {} & \sum_{\substack{m=0\\[2pt] m \text{ is even}}}^{+\infty}
   \int_{\Sf > s_m > \cdots > s_1 > \Si}
     \max\{\|O_s\|_s, 1\} \cdot (m-1)!! C_b^{m/2} \,\dd s_1 \cdots \,\dd s_m \\
={} & \max\{\|O_s\|_s, 1\} \sum_{\substack{m=0\\[2pt] m \text{ is even}}}^{+\infty}
  \frac{(\Sf-\Si)^m}{m!!} C_b^{m/2}
= \max\{\|O_s\|_s, 1\} \exp \left( \frac{C_b(\Sf-\Si)^2}{2} \right).
\end{split}
\end{equation}
Since $\mc{H}_s$ is a finite-dimensional space, the observable $O_s$ is always a bounded operator. Therefore the right-hand side of the above equation is finite, 
which shows the absolute convergence. In the spin-boson model, people are usually interested in the population of the spin on each of the two spin states, meaning that we can take $O_s = \hat{\sigma}_z$.

When the Dyson series expansion \eqref{eq:DysonG} is directly used in the Monte
Carlo simulation, the fast growth of the variance as $\Sf - \Si$ increases
causes great numerical difficulties. Similar to
\eqref{eq:abs_convergence_spin_Boson}, the expectation of $\|\mc{U}^{(0)}(\Sf,
\bs, \Si) \mc{L}(\mf{q})\|_s^2$ can be estimated by
\begin{equation}
\begin{split}
& \sum_{\substack{m=0\\[2pt] m \text{ is even}}}^{+\infty}
  \int_{\Sf > s_m > \cdots > s_1 > \Si} \sum_{\mf{q} \in \mQ(\bs)}
    \|\mc{U}^{(0)}(\Sf, \bs, \Si)\|_s^2 |\mc{L}(\mf{q})|^2
  \,\dd s_1 \cdots \,\dd s_m \\
\leqslant {} & \sum_{\substack{m=0\\[2pt] m \text{ is even}}}^{+\infty}
  \int_{\Sf > s_m > \cdots > s_1 > \Si}
    (m-1)!! (\max\{\|O_s\|_s, 1\})^2 C_b^m \,\dd s_1 \cdots \,\dd s_m \\
={} & (\max\{\|O_s\|_s, 1\})^2 \exp \left( \frac{C_b^2 (\Sf-\Si)^2}{2} \right).
\end{split}
\end{equation}
Since $\|G(\Sf, \Si)\|_s \leqslant \max\{\|O_s\|_s, 1\}$, the growth of the variance is characterized by
\begin{displaymath}
\exp \left( \frac{C_b^2 (\Sf-\Si)^2}{2} \right) - 1.
\end{displaymath}
This is the well-known ``dynamical sign problem'' in the quantum Monte Carlo simulations. The inchworm method relieves the dynamical sign problem by lumping a number of samples based on the simulation results of shorter bold lines, which pushes the simulation time significantly longer.

Different from the inchworm methods presented in \cite{Chen2017a}, which directly applies the Monte Carlo method to \eqref{eq:inchworm}, we will design our numerical method based on the integro-differential equation \eqref{eq:id_eq}, and apply the idea of classical Runge-Kutta type methods for temporal discretization to solve the full propagators. As will be presented, this can both enhance the numerical efficiency and simplify the implementation. The algorithm will be detailed in the following section.

\section{Numerical method} \label{sec:num}
In order to find the full propagator $G(\Sf,\Si)$ numerically for all $\Si \in
[0,2t]$ and $\Sf \in [\Si, 2t]$, below we are going to develop a second-order
method in analogous to Heun's method for general ordinary differential
equations. For the general initial value problem
\begin{displaymath}
\frac{\mathrm{d}}{\mathrm{d}t} u(t) = f(t, u(t)), \qquad t > 0
\end{displaymath}
with initial condition $u(0) = u_0$, Heun's method reads
\begin{displaymath}
\begin{aligned}
U_k^* &= U_{k-1} + \Delta t f(t_{k-1}, U_{k-1}), \\
U_k &= \frac{1}{2} (U_{k-1} + U_k^*) + \frac{1}{2} \Delta t f(t_k, U_k^*),
\end{aligned}
\end{displaymath}
where $\Delta t$ is the time step, $t_k = t_{k-1} + \Delta t$, and $U_k$ is
the numerical approximation of $u(t_k)$. The method is second-order if the
solution is third-order continuously differentiable. In our case, the full
propagator $G(\cdot,\cdot)$ is known to be discontinuous on the line segments
$[0,t] \times \{t\}$ and $\{t\} \times [0,t]$ due to the presence of the
observable $O_s$.  Therefore in order to keep the second-order convergence
rate, special care needs to be taken for these discontinuities.

In our implementation, we take a uniform time step $\Delta t = t / N$, and
compute the numerical solutions $G(\Sf, \Si)$ only when $\Sf$ and $\Si$ are
multiples of $\Delta t$. This corresponds to a two-dimensional triangular mesh
shown in Figure \ref{fig:mesh} (for $N = 5$). Let $t_k = k \Delta t$ and
$G^{\Delta t}_{jk}$ be the numerical approximation of $G(t_j, t_k)$. Due to the
discontinuity, when $j = N$ (green line in Figure \ref{fig:mesh}) or $k = N$
(blue line in Figure \ref{fig:mesh}), $G^{\Delta t}_{jk}$ is considered to be
multiple-valued, and we will use the notation $j, k = N^+, N^-$ to define the
left and right limits. Precisely,
\begin{itemize}
\item For $k < N$, $G^{\Delta t}_{N^-k}$ and $G^{\Delta t}_{N^+k}$ are
  respectively the approximation of $\lim\limits_{s \rightarrow t^-} G(s,
  k\Delta t)$ and $\lim\limits_{s \rightarrow t^+} G(s, k\Delta t)$.
\item For $j > N$, $G^{\Delta t}_{jN^-}$ and $G^{\Delta t}_{jN^+}$ are
  respectively the approximation of $\lim\limits_{s \rightarrow t^-} G(j \Delta
  t, s)$ and $\lim\limits_{s \rightarrow t^+} G(j \Delta t, s)$.
\item $G^{\Delta t}_{N^- N^-} = G^{\Delta t}_{N^+ N^+} = \Id$ and $G^{\Delta
  t}_{N^+ N^-} = O_s$.
\end{itemize}

\begin{figure}[!ht]
\centering
\includegraphics[width=.4\textwidth]{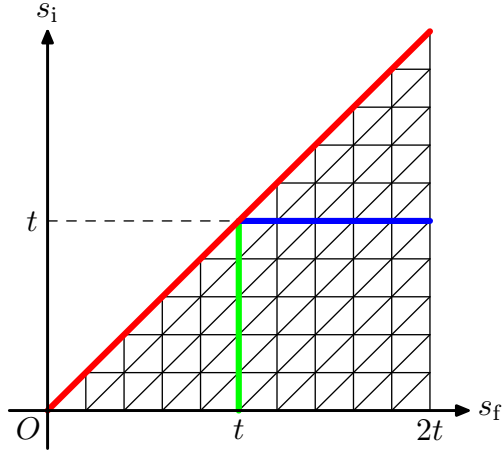}
\caption{The two-dimensional mesh to discretize $G(\Sf, \Si)$ for $N=5$}
\label{fig:mesh}
\end{figure}

Now we are ready to sketch our numerical algorithm. In general, the values of 
$G^{\Delta t}_{jk}$ are obtained in the following order:
\begin{equation} \label{eq:order}
\begin{aligned}
& G^{\Delta t}_{00}, \\
& G^{\Delta t}_{11}, \quad G^{\Delta t}_{10}, \\
& \cdots \quad \cdots \quad \cdots \quad \cdots \\
& G^{\Delta t}_{N-1,N-1}, \quad \cdots, \quad G^{\Delta t}_{N-1,0}, \\
& G^{\Delta t}_{N^-,N^-}, \quad G^{\Delta t}_{N^-,N-1},
  \quad \cdots, \quad G^{\Delta t}_{N^-,0}, \\
& G^{\Delta t}_{N^+,N^+}, \quad G^{\Delta t}_{N^+,N^-},
  \quad G^{\Delta t}_{N^+, N-1}, \quad \cdots, \quad G^{\Delta t}_{N^+,0}, \\
& G^{\Delta t}_{N+1,N+1}, \quad G^{\Delta t}_{N+1,N^+},
  \quad G^{\Delta t}_{N+1,N^-}, \quad G^{\Delta t}_{N+1, N-1},
  \quad \cdots, \quad G^{\Delta t}_{N+1,0}, \\
& \cdots \quad \cdots \quad \cdots \quad \cdots \quad
  \cdots \quad \cdots \quad \cdots \quad \cdots \quad
  \cdots \quad \cdots \quad \cdots \quad \cdots \quad \cdots \quad \cdots \\
& G^{\Delta t}_{2N,2N}, \quad \cdots, \quad G^{\Delta t}_{2N, N+1}, \quad
  G^{\Delta t}_{2N,N^+}, \quad G^{\Delta t}_{2N, N^-}, \quad
  G^{\Delta t}_{2N,N-1}, \quad \cdots, \quad G^{\Delta t}_{2N,0}.
\end{aligned}
\end{equation}
This corresponds to computing the values of $G(\cdot,\cdot)$ column by column
in Figure \ref{fig:mesh} with the green line and the blue line split to two
lines due to the discontinuity. We first list out the three special cases:
\begin{itemize}
\item If $j = k$ ($N^+$ is considered to be not equal to $N^-$), we set
  $G^{\Delta t}_{jk}$ to be $\Id$. This corresponds to the nodes on the red line
  in Figure \ref{fig:mesh}.
\item If $j = N^+$ and $k \neq j$, we set $G^{\Delta t}_{jk}$ to be $O_s
  G^{\Delta t}_{N^- k}$. This is applied to the nodes on the green line
  in Figure \ref{fig:mesh}.
\item If $k = N^-$ and $j \neq k$, we set $G^{\Delta t}_{jk}$ to be $G^{\Delta
  t}_{j N^+} O_s$. This is applied to the nodes on the blue line in Figure
  \ref{fig:mesh}.
\end{itemize}
For all other cases, we follow Heun's method and find the values of $G^{\Delta
t}_{jk}$ as follows:
\begin{enumerate}
\item Let
  \begin{equation} \label{eq:stage1}
  \begin{split}
    G_{jk}^* = G^{\Delta t}_{j-1,k} & + \sgn(t_j - t) \Delta t \Bigg[
        \ii H_s G^{\Delta t}_{j-1,k} +
        \sum_{\substack{m=1\\[2pt] m \text{ is odd}}}^{+\infty} \ii^{m+1}
          \int_{t_{j-1} > s_m > \cdots > s_1 > t_k}
          \sum_{\mf{q} \in \mQ_c(t_{j-1}, \bs)}
          (-1)^{\#\{\bs < t\}} \\
    & \times \mc{L}(\mf{q}) W_s G_I(t_{j-1}, s_m) W_s G_I(s_m, s_{m-1}) W_s
      \cdots W_s G_I(s_2, s_1) W_s G_I(s_1, t_k) \,\dd s_1 \cdots \,\dd s_m \Bigg],
  \end{split}
  \end{equation}
  where $G_I(\cdot,\cdot)$ is the interpolated function satisfying
  \begin{equation} \label{eq:G_I}
  G_I(t_{j'}, t_{k'}) = G^{\Delta t}_{j'k'}, \qquad
    \text{for all integers $j', k'$ satisfying }
    k \leqslant k' \leqslant j' \leqslant j-1.
  \end{equation}
  In our implementation, piecewise linear interpolation is adopted, and the
  function $G_I(\cdot,\cdot)$ is linear on each triangle in Figure
  \ref{fig:mesh}.
\item Set $G_{jk}^{\Delta t}$ to be
  \begin{equation} \label{eq:stage2}
  \begin{split}
    & \frac{1}{2} G^{\Delta t}_{j-1,k} + \frac{1}{2} G_{jk}^* +
      \frac{1}{2} \sgn(t_j - t) \Delta t \Bigg[
        \ii H_s G_{jk}^* +
        \sum_{\substack{m=1\\[2pt] m \text{ is odd}}}^{+\infty} \ii^{m+1}
          \int_{t_j > s_m > \cdots > s_1 > t_k}
          \sum_{\mf{q} \in \mQ_c(t_j, \bs)} \\
    & \qquad (-1)^{\#\{\bs < t\}} \mc{L}(\mf{q})
      W_s G_I^*(t_j, s_m) W_s G_I^*(s_m, s_{m-1}) W_s \cdots W_s
      G_I^*(s_2, s_1) W_s G_I^*(s_1, t_k) \,\dd s_1 \cdots \,\dd s_m \Bigg],
  \end{split}
  \end{equation}
  where $G^*_I(\cdot,\cdot)$ is the interpolated function satisfying
  $G^*_I(t_j,t_k) = G^*_{jk}$ and
  \begin{equation} \label{eq:Gstar_I}
  G^*_I(t_{j'}, t_{k'}) = G^{\Delta t}_{j'k'}, \quad
    \text{for all integers $j', k'$ satisfying }
    k \leqslant k' \leqslant j' \leqslant j
    \text{ and } (j',k') \neq (j,k).
  \end{equation}
  Again, the same piecewise linear interpolation is adopted.
\end{enumerate}
\added{Note that Heun's method provides more convenient implementation than
other second-order Runge-Kutta methods (e.g. mid-point method), since the
result of the first stage ($G_{jk}^*$) is a first-order prediction of $G(t_j,
t_k)$, so that in the second stage \eqref{eq:stage2}, the evaluation of the
integral requires no additional interpolation or extrapolation (which is needed
if $G_{jk}^*$ falls between grid points)}. When applying \eqref{eq:stage1} and
\eqref{eq:stage2}, we follow the rules as below:
\begin{itemize}
\item When $j = N^-$, $j-1$ is interpreted as $N-1$; when $j = N+1$, $j-1$ is
  interpreted as $N^+$.
\item $\sgn(t_{N^-} - t) = -1$.
\item The interpolation of $G_I$ and $G_I^*$ should respect such
  discontinuities. Therefore when $k = N$ or $l = N$, the equation
  \eqref{eq:G_I} should be interpreted by
  \begin{gather*}
  \lim_{s\rightarrow t^{\pm}} G_I(t_k, s) = G_{kN^{\pm}}^{\Delta t}, \qquad
  \lim_{s\rightarrow t^{\pm}} G_I(s, t_l) = G_{N^{\pm}l}^{\Delta t}, \\
  \lim_{s\rightarrow t^+}
    \lim_{\tilde{s} \rightarrow t^+} G_I(\tilde{s}, s) = 
  \lim_{s\rightarrow t^-}
    \lim_{\tilde{s} \rightarrow t^-} G_I(s, \tilde{s}) = \Id, \qquad
  \lim_{s\rightarrow t^+}
    \lim_{\tilde{s} \rightarrow t^-} G_I(s, \tilde{s}) = O_s.
  \end{gather*}
  The equation \eqref{eq:Gstar_I} should be similarly interpreted. Especially,
  when $j = N^+$, the term $G_I(t_{N^+}, s_m)$ in the integral should be
  interpreted as $O_s G_I(t_{N^-}, s_m)$.
\end{itemize}
The order of computation \eqref{eq:order} ensures that all information needed
in the two stages has been obtained beforehand.

\added{
The above algorithm can be further improved by noticing that the equation
\eqref{eq:id_eq} has a clear linear part $\ii H_s G(\Sarr,\Si)$.  Therefore,
when $H_s$ is a small matrix, the exponential Runge-Kutta method can be
applied. In general, for the initial value problem
\begin{displaymath}
\frac{\mathrm{d}}{\mathrm{d}t} u(t) = \mc{L}u(t) + f(t,u(t)), \qquad t > 0
\end{displaymath}
with $\mc{L}$ being a linear operator, the Heun's method can be applied to
$\ee^{-t \mc{L}} u$ and we get the numerical scheme
\begin{align*}
U_k^* &= \ee^{\Delta t \mc{L}} U_{k-1} + \Delta t f(t_{k-1}, U_{k-1}), \\
U_k &= \ee^{\Delta t \mc{L}} U_{k-1} + \frac{1}{2} \Delta t
  \left[ f(t_{k-1}, U_{k-1}) + \ee^{-\Delta t \mc{L}} f(t_k, U_k^*) \right].
\end{align*}
In the current case, the operator $\mc{L}$ corresponds to $\pm \ii H_s$.
Therefore, when the above scheme is applied, the first term in the second
equation is $\ee^{\pm\ii \Delta t H_s} G_{j-1,k}^{\Delta t} = G_s^{(0)}(j \Delta
t, (j-1) \Delta t) G_{j-1,k}^{\Delta t}$, which is exactly the first term in
the inchworm expansion \eqref{eq:inchworm}. This is particularly useful if the linear part of the equation is stiff. While for the spin-boson system, which is not stiff, we will be content with the non-exponential integrator described above. }

Now we consider the numerical computation of the infinite sums in
\eqref{eq:stage1} and \eqref{eq:stage2}. The numerical results in
\cite{Chen2017b} show that in the original inchworm method \eqref{eq:inchworm},
we can truncate the series at $m = M$ for some positive even integer $M$, and
obtain results with sufficient quality. In our method, the integer $M$ needs to
be odd and we perform the similar truncation by replacing the infinite sums in
\eqref{eq:stage1} and \eqref{eq:stage2} with
\begin{equation} \label{eq:truncation}
\sum_{\substack{m=1\\[2pt] m \text{ is odd}}}^M \ii^{m+1}
  \int_{t_{\mathrm{f}} > s_m > \cdots > s_1 > t_k}
  \sum_{\mf{q} \in \mQ_c(t_{\mathrm{f}}, \bs)} S(\mf{q},\bs)
  \,\dd s_1 \cdots \,\dd s_m,
\end{equation}
where $S(\mf{q},\bs)$ is the summand in \eqref{eq:stage1} or \eqref{eq:stage2},
and $t_{\mathrm{f}}$ is the corresponding $t_j$ or $t_{j+1}$. For every $m$,
the high-dimensional integral over $\bs$ and the sum over $\mf{q}$ are
evaluated using the Monte Carlo method. The sampling of $\bs$ can be done by
sampling a uniform distribution in $[t_k, t_{\mathrm{f}}]^m$, and then sort the
time points. As for $\mf{q}$, we first list out all the elements in $\mQ_c
(t_{\mathrm{f}}, \bs)$, and then pick a random one for each sample. Obviously
the number of pair sets in $\mQ_c(t_{\mathrm{f}}, \bs)$ depends only on the
value of $m$, and it has been given in \cite{Stein1978a} that this number can
be evaluated by
\begin{displaymath}
N_1 = 1, \qquad N_m = \frac{m-1}{2}
  \sum_{\substack{j=1\\[1pt] j \text{ is odd}}}^{m-2} N_j N_{m-1-j}.
\end{displaymath}
In \cite{Stein1978b}, it is proven that $N_m$ grows asymptotically as $m!!$. In
our numerical experiments, we are only concerned about a small $m$, and
therefore such a strategy is feasible.

As the end of this section, we will discuss briefly the difference between this
numerical method and the original inchworm method based on the Monte Carlo
simulation of \eqref{eq:inchworm}. Both numerical methods are time-stepping
methods which can be regarded as a numerical approximation of
\begin{equation} \label{eq:time_stepping}
G(\Sf, \Si) = G(\Sarr, \Si) + \int_{\Sarr}^{\Sf} \mathit{RHS}(s) \,\dd s,
\end{equation}
where $\mathit{RHS}(s)$ is the right-hand side of \eqref{eq:forward} or
\eqref{eq:backward}, with $\Sarr$ replaced by $s$.  From the diagrammatic
equation \eqref{eq:theorem}, one can see that the original inchworm algorithm
in principle allows to put an arbitrary number of points between $\Sarr$ and
$\Sf$ to evaluate the integral in \eqref{eq:time_stepping}, and all these
points are stochastic, which implies that in the time-stepping process, the
integral between two time steps is approximated using a Monte Carlo simulation;
while in our method, this is replaced by a numerical integration of
second-order convergence, which can be expected to be more efficient. One
possible benefit of the Monte Carlo integration from $\Sarr$ to $\Sf$ is that
when a sufficient number of samples are used, this integral can be evaluated up
to arbitrary precision; while in the Runge-Kutta integration, an error of
$O((\Sf - \Sarr)^{\alpha})$, where $\alpha$ depends on the order of the method,
is always there. However, this does not mean that the numerical error will
vanish as the number of samples increases in the original inchworm algorithm,
since another part of numerical error comes from the approximation of
$\mathit{RHS}(s)$, where the interpolation on the lattice shown in Figure
\ref{fig:mesh} is inevitable in both methods. Such error will only vanish as
the grid size tends to zero, but will not vanish as the number of samples tends
to infinity. Usually, it is sufficient to match the order of accuracy for
evaluating the integral in \eqref{eq:time_stepping} and the interpolation.
\added{This can also be achieved by using another formulation of the original
inchworm method \eqref{eq:DysonArrow}, and applying Gaussian quadrature to the
integral from $\Sarr$ to $\Sf$, which will involve multi-dimensional Gaussian
quadrature. The implementation is easier and cheaper by
using the Runge-Kutta integration strategy.} Another benefit of this new method
is that we can draw the samples for $\mf{q}$ more easily. Consider a diagram
with four points. The set $\mQ_c(s_1, s_2, s_3, s_4)$ contains only one set of
pairs $\{(s_1, s_3), (s_2, s_4)\}$, while the inchworm proper set of pairs has
9 possibilities:
\begin{itemize}
\item $\mQ_{\Sarr}(s_1, s_2, s_3, s_4) = \{\{(s_1, s_3), (s_2, s_4)\}\}$ if $s_3 < \Sarr \leqslant s_4$;
\item $\mQ_{\Sarr}(s_1, s_2, s_3, s_4) = \{\{(s_1, s_3), (s_2, s_4)\}, \{(s_1, s_4), (s_2, s_3)\}\}$ if $s_2 < \Sarr \leqslant s_3$;
\item $\mQ_{\Sarr}(s_1, s_2, s_3, s_4) = \{\{(s_1, s_3), (s_2, s_4)\}, \{(s_1, s_4), (s_2, s_3)\}, \{(s_1, s_2), (s_3, s_4)\}\}$ if $s_1 < \Sarr \leqslant s_2$;
\item $\mQ_{\Sarr}(s_1, s_2, s_3, s_4) = \{\{(s_1, s_3), (s_2, s_4)\}, \{(s_1, s_4), (s_2, s_3)\}, \{(s_1, s_2), (s_3, s_4)\}\}$ if $\Sarr \leqslant s_1$.
\end{itemize}
For the more time points, finding out all the inchworm proper pair sets is even
harder than just finding out $\mQ_c(\bs)$. This \added{also makes the
implementation of Runge-Kutta method cheaper than Gaussian quadrature.}

\added{
\begin{remark}
As mentioned in Remark \ref{rem:boson}, 
we focus on the bosonic
setting in this work. For fermionic settings, the number of inchworm proper diagrams also
grows asymptotically as $O\left( \left( \frac{m+1}{2} \right)! \right)$, as
indicated in \cite{Boag2018}. On the other hand, in fermionic settings, by a more
sophisticated combinatoric technique called inclusion-exclusion principle
 \cite{Boag2018}, the number of diagrams to be summed can be reduced
intrinsically to $O(m^3 \beta^m)$ for a constant $\beta < 2$. This is to some extent in the similar spirit of the reduced cost of diagrammatic summations achieved by the determinant diagrammatic Monte
Carlo method \cite{Rossi2017} for fermionic systems.
\end{remark}
}

\section{Numerical experiments} \label{sec:examples}
In our numerical experiments, we consider the spin-Boson model with a bath with
Ohmic spectral density, for which the frequency $\omega_l$ are distributed in
$[0, \omega_{\max}]$ as introduced in \cite{Makri1999}:
\begin{displaymath}
\omega_l = \omega_c
  \ln \left( 1 - \frac{l}{L} [1 - \exp(\omega_{\max} / \omega_c)] \right),
  \qquad l = 1,\cdots,L,
\end{displaymath}
where $\omega_c$ is the primary frequency to be specified later. The coupling
intensity $c_l$ is
\begin{displaymath}
c_l = \omega_l \sqrt{\frac{\xi \omega_c}{L} [1 - \exp(\omega_{\max}/\omega_c)]},
  \qquad l = 1,\cdots,L,
\end{displaymath}
with $\xi$ being the Kondo parameter. To compare our results with reference
solutions, we adopt the parameters provided in \cite{Kelly2013} where $L =
200$ and $\beta = 5\Delta^{-1}$. Different settings of bias, coupling intensity
and nonadiabaticity will be considered in our experiments. In all the numerical
tests, the maximum frequency $\omega_{\max}$ is set to be $4\omega_c$, and the
time step is chosen to be $\Delta t = 0.1$ if not otherwise specified.
Numerical results obtained by the QuAPI method \cite{Makri1995a, Makri1995b}
will be used as reference solutions.

\subsection{Experiments with changing bias}
We first choose $\omega_c = 2.5\Delta$ and $\xi = 0.2$, for which the amplitude
of the bath correlation function $B(\cdot,\cdot)$ is plotted in Figure
\ref{fig:B}. In our implementation, we precompute the values of
$B(\cdot,\cdot)$ up to a very high precision on a very fine grid, and in the
inchworm algorithm, we retrieve its value by linear interpolation when necessary.
The numerical results for $\epsilon = 0, \Delta$ and $2\Delta$ are
given in Figure \ref{fig:bias}. It turns out that $M = 3$ in
\eqref{eq:truncation} can already provide satisfying numerical results up to
time $t = 5\Delta^{-1}$, and the general behavior of the observable has been
well captured by the results of $M = 1$.
\begin{figure}[!ht]
\centering
\includegraphics[width=.33\textwidth]{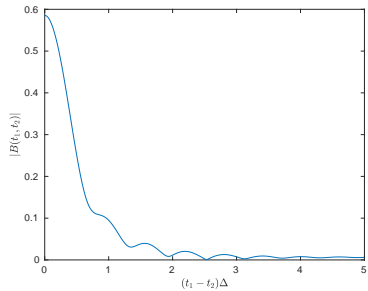}
\caption{The amplitude of the bath correlation function $B(\cdot, \cdot)$}
\label{fig:B}
\end{figure}

\begin{figure}[!ht]
\centering
\includegraphics[width=.33\textwidth]{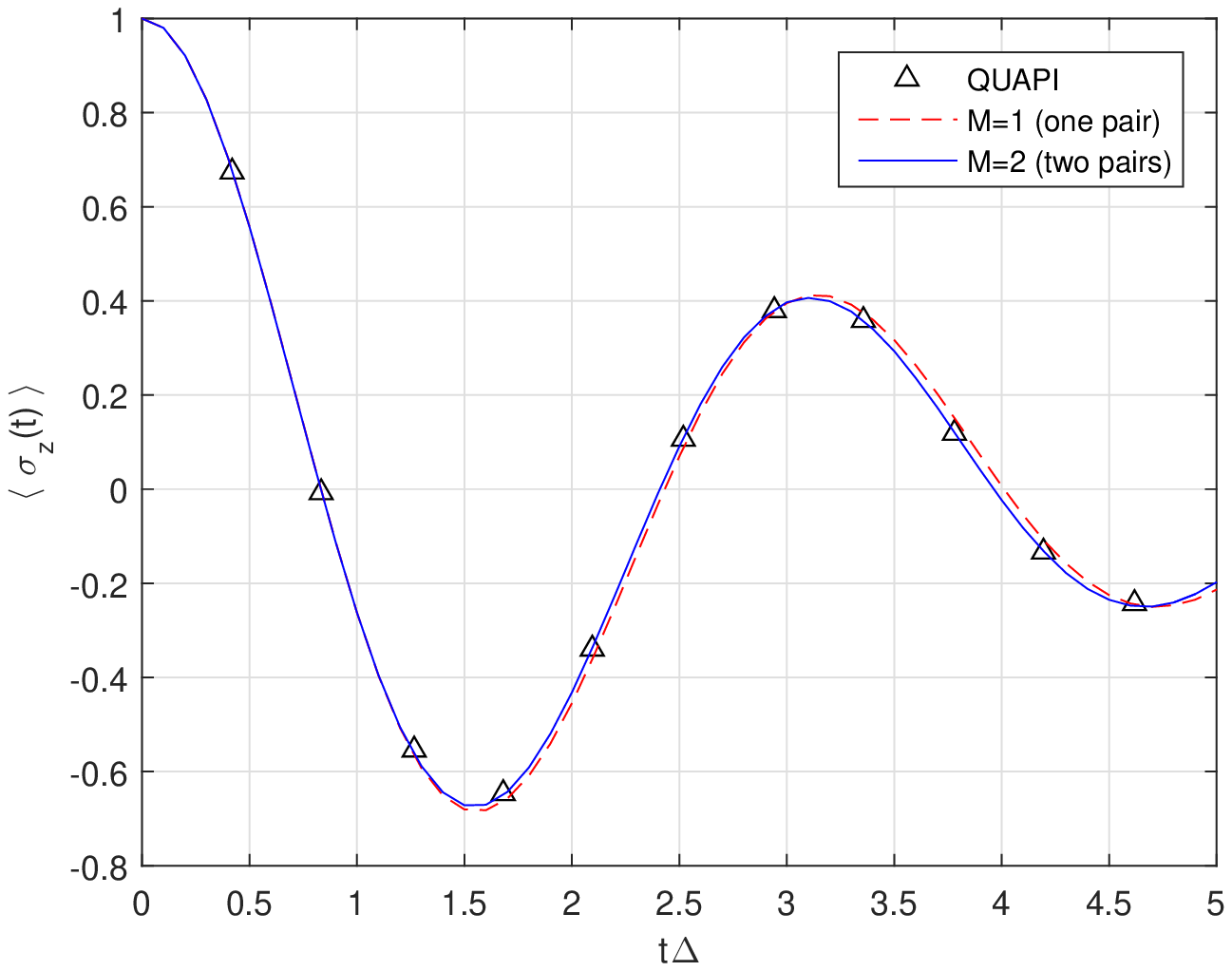}
\hspace{-10pt}
\includegraphics[width=.33\textwidth]{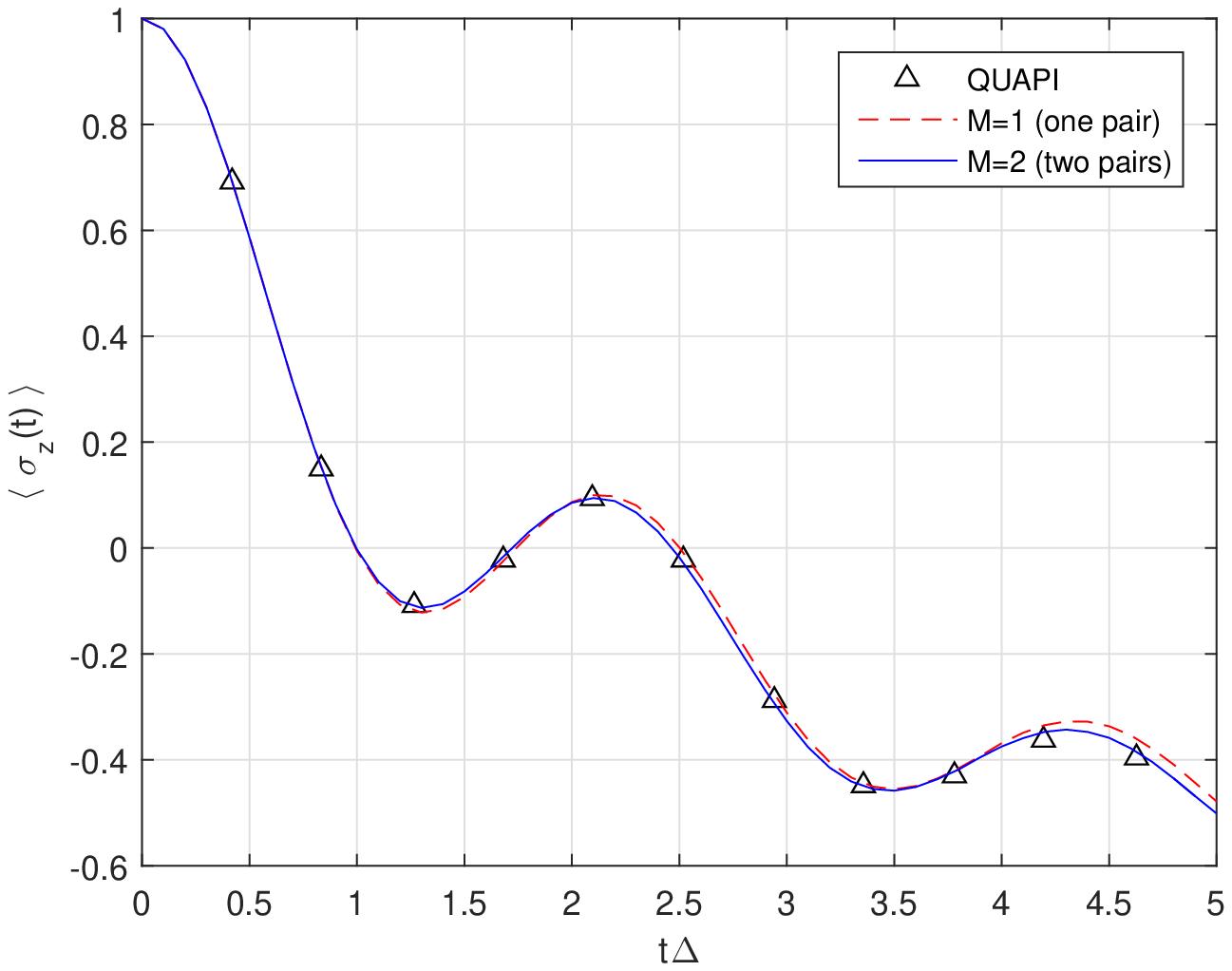}
\hspace{-10pt}
\includegraphics[width=.33\textwidth]{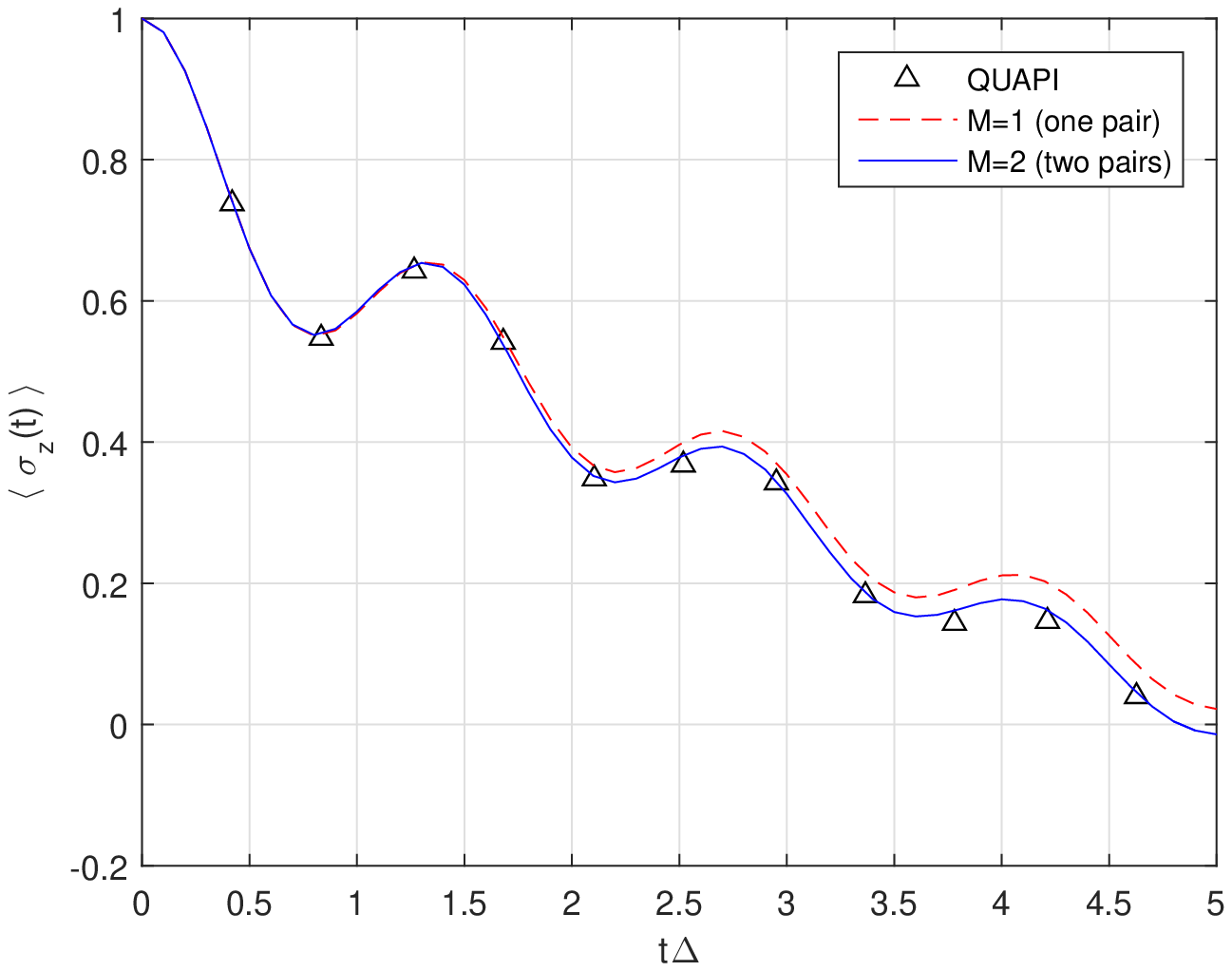}
\caption{Evolution of $\langle \sigma_z(t) \rangle$ under different settings of
the electronic bias (from left to right: $\epsilon = 0$, $\epsilon = \Delta$
and $\epsilon = 2\Delta$), with other parameters $\omega_c = 2.5\Delta$ and
$\xi = 0.2$. QuAPI results are plotted as a reference.}
\label{fig:bias}
\end{figure}

The order of convergence of our numerical method is also verified using the
test case with $\epsilon = 0$. In general, the stochastic error and the
``deterministic error'' caused by Runge-Kutta and interpolation cannot be
separated. In order to cast off the stochastic error, we only consider the
truncation $M = 1$, for which the sum \eqref{eq:truncation} contains only a
one-dimensional integral, and thus can be evaluated by the composite mid-point
rule. As a result, the whole scheme is deterministic and the order of
convergence is still expected to be $O(\Delta t^2)$. The reference solution is
obtained by choosing $\Delta t = 1/320 \Delta^{-1}$, and the numerical error of
$\langle \hat{\sigma}_z(t) \rangle$ is tabulated in Table \ref{tab:order},
which clearly shows the numerical error is second order of $\Delta t$.
\begin{table}[!ht]
\centering \small
\begin{tabular}{ c||c c|c c| c c|c c }
\hline
$h$ & error ($t=0.5$) & order & error ($t=1$) & order & error ($t=1.5$) & order&error ($t=2$) & order  \\
\hline
1/10 &  $0.0014$ &  -- & $0.0022$ & -- & $0.0014$ & -- & $0.0067$ & --\\
1/20 &  $0.0004$ &  $1.9740$ & $0.0005$ & $2.0537$ & $0.0004$ & $1.8855$ & $0.0017$ & $1.9910$\\
1/30 &  $0.1620\times 10^{-3}$ &  $1.9961$ & $0.2343\times 10^{-3}$ & $2.0307$ & $0.1722\times 10^{-3}$ & $1.9632$ & $0.7490\times 10^{-3}$ & $2.0064$ \\
1/40 &  $0.0909\times 10^{-3}$ &  $2.0087$ & $0.1305\times 10^{-3}$ & $2.0344$ & $0.0970\times 10^{-3}$ & $1.9941$ & $0.4189\times 10^{-3}$ & $2.0204$ \\
1/50 &  $0.0578\times 10^{-3}$ &  $2.0299$ & $0.0826\times 10^{-3}$ & $2.0470$ & $0.0618\times 10^{-3}$ & $2.0188$ & $0.2658\times 10^{-3}$ & $2.0376$ \\
1/60 &  $0.0398\times 10^{-3}$ &  $2.0505$ & $0.0567\times 10^{-3}$ & $2.0657$ & $0.0426\times 10^{-3}$ & $2.0422$ & $0.1826\times 10^{-3}$ & $2.0587$ \\
\hline
\end{tabular}
\caption{Numerical error of $\langle \hat{\sigma}_z(t) \rangle$ and the order of accuracy}
\label{tab:order}
\end{table}

\subsection{Experiments with changing coupling intensity}
Now we fix the values of $\omega_c$ and $\epsilon$ to be $2.5\Delta$ and
$\Delta$ respectively, and consider the coupling intensity $\xi = 0.1$ and $\xi
= 0.2$. It can be expected that the convergence of the inchworm series gets
slower when $\xi$ increases, as is confirmed in our numerical tests shown in
Figure \ref{fig:intensity}. In spite of this, very good matching with the QuAPI
results can still be obtained using $M = 1$ for both parameters, and further
improvement is indeed achieved by using $M = 3$.

As a reference, the numerical results for a direct summation of the truncated
Dyson series
\begin{displaymath}
G(\Sf, \Si) \approx \sum_{\substack{m=0\\[2pt] m \text{ is even}}}^M
   \int_{\Sf > s_m > \cdots > s_1 > \Si}
   \sum_{\mf{q} \in \mQ(\bs)} (-1)^{\#\{\bs < t\}} \ii^m \mc{U}^{(0)}(\Sf, \bs, \Si)
   \mc{L}(\mf{q}) \,\dd s_1 \cdots \,\dd s_m,
\end{displaymath}
are also provided in Figure \ref{fig:intensity} with label ``bare dQMC''. The
results show that the convergence of the Dyson series \eqref{eq:DysonG} is much
slower than the inchworm series, and therefore require a much larger $M$ to get
the same quality of the solutions for large $t$.

\begin{figure}[!ht]
\centering
\includegraphics[width=.33\textwidth]{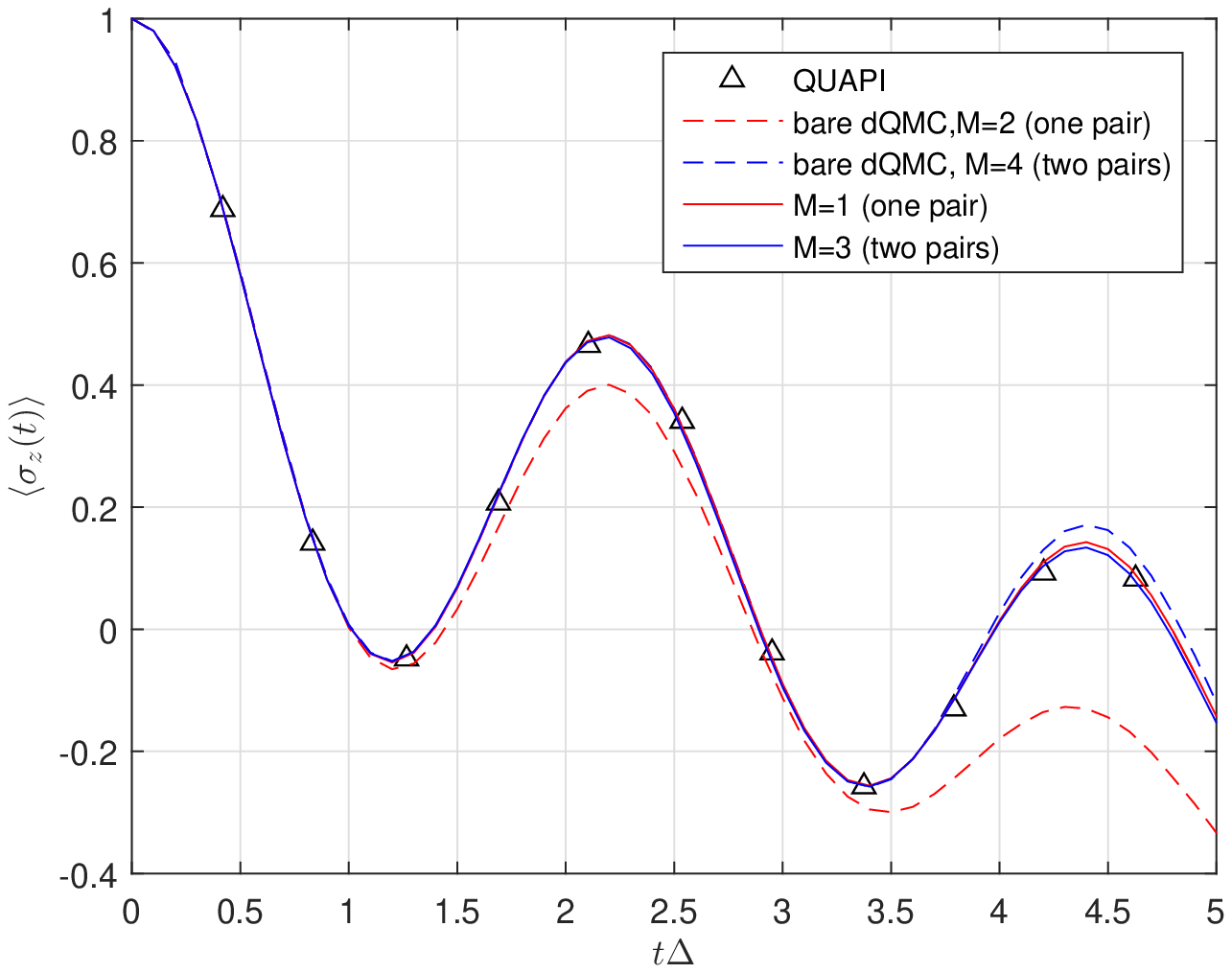}
\hspace{10pt}
\includegraphics[width=.33\textwidth]{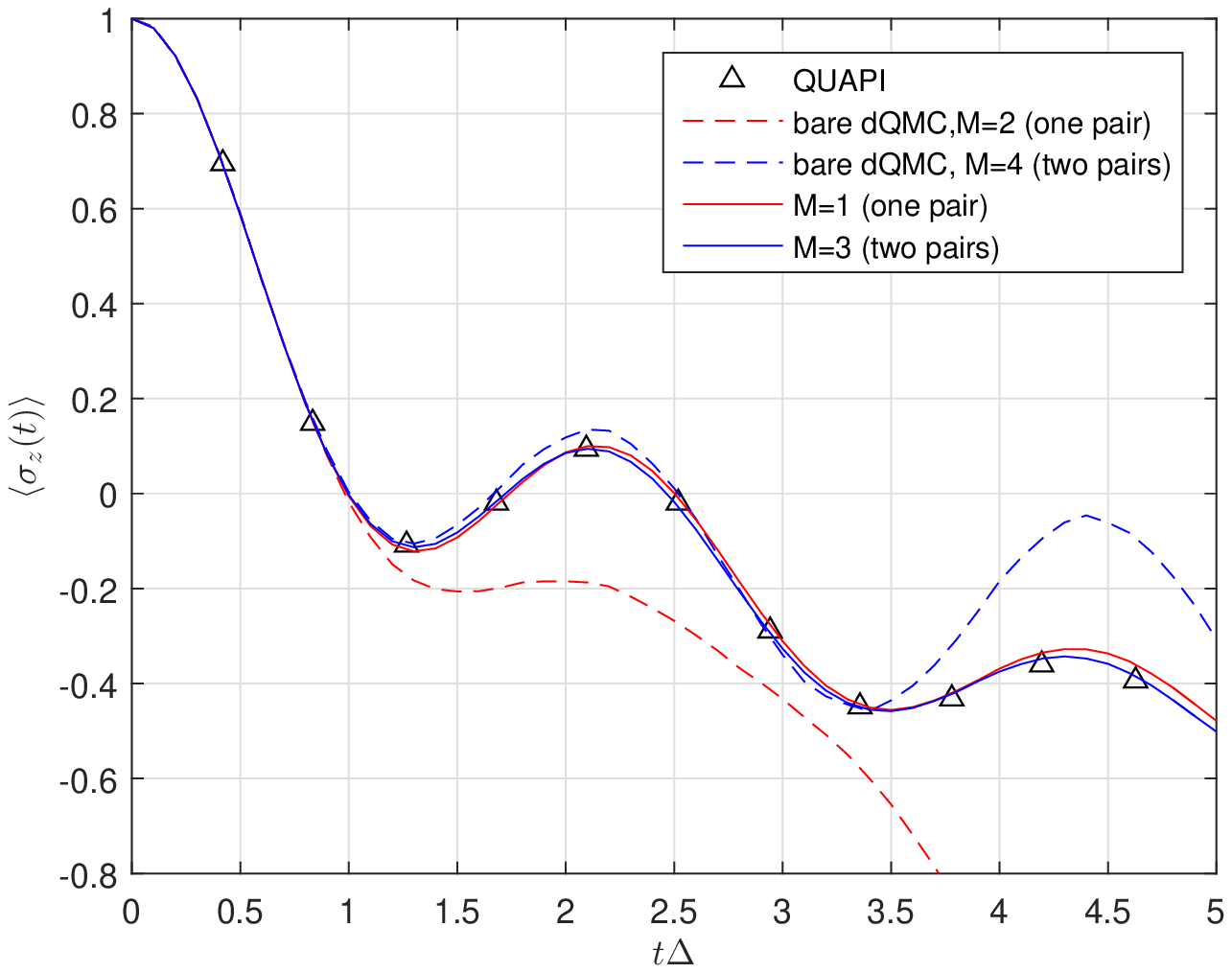}
\caption{Evolution of $\langle \sigma_z(t) \rangle$ under different settings of
the coupling intensity (left: $\xi = 0.1$, right: $\xi = 0.2$), with other
parameters $\omega_c = 2.5\Delta$ and $\epsilon = \Delta$. QuAPI results are
plotted as a reference.}
\label{fig:intensity}
\end{figure}

\subsection{Experiments with changing nonadiabaticity}
To show the role of the parameter $\omega_c$, we fix $\xi$ to be $0.4$ and
$\epsilon$ to be $\Delta$, and consider the cases $\omega_c = 0.25\Delta$ and
$\omega_c = 2.5\Delta$. Since the upper bound of $|B(\tau_1, \tau_2)|$ in
\eqref{eq:B_bound} gets larger when $\omega_c$ increases, we can expect slower
convergence in terms of $M$ for larger $\omega_c$. The evolution of the
observable is plotted in Figure \ref{fig:nonadiabaticity}. In the right figure,
due to the large value of both $\xi$ and $\omega_c$, even when $M = 3$ is used,
some discrepancy between our results and QuAPI can still be observed.

\begin{figure}[!ht]
\centering
\includegraphics[width=.33\textwidth]{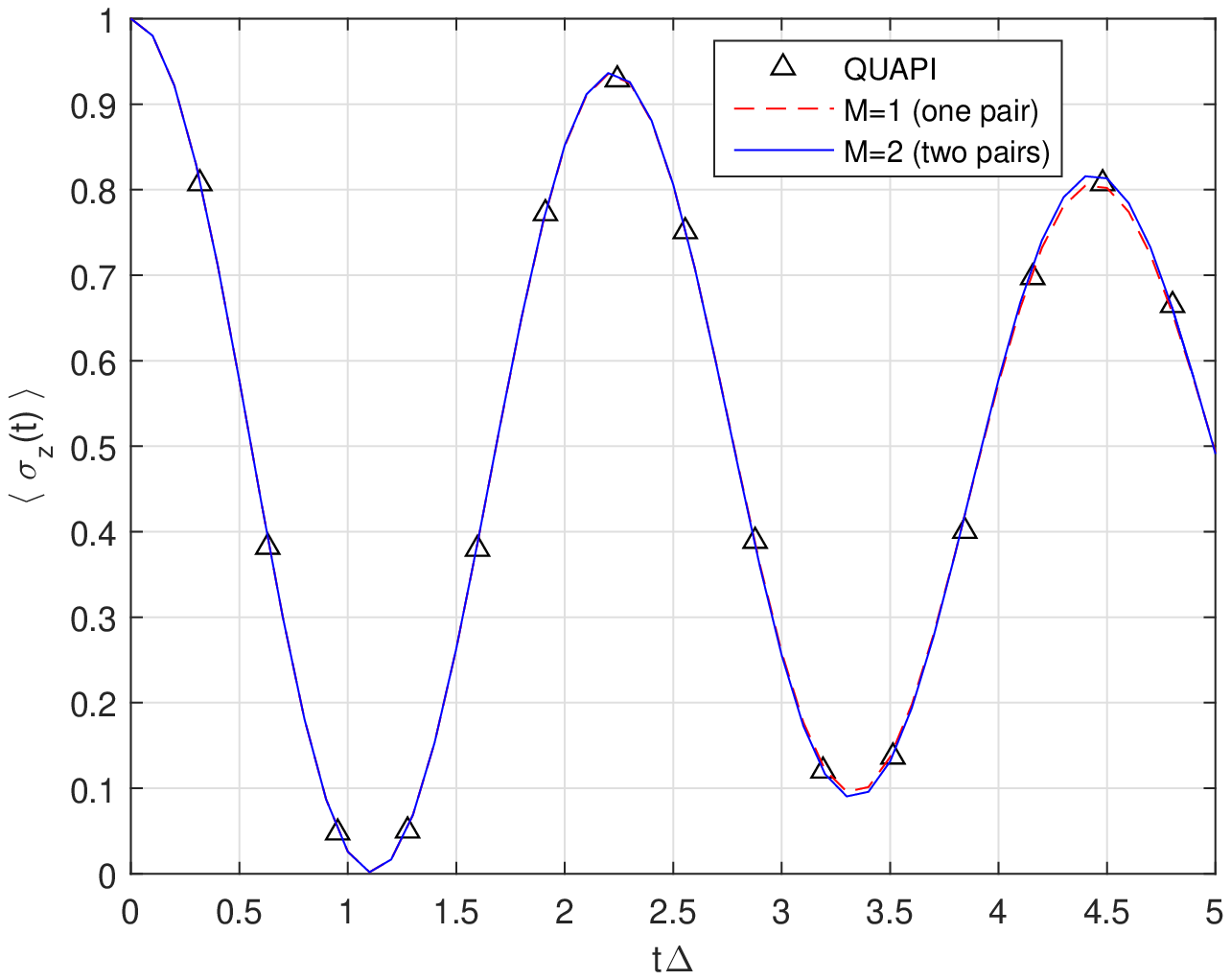}
\hspace{10pt}
\includegraphics[width=.33\textwidth]{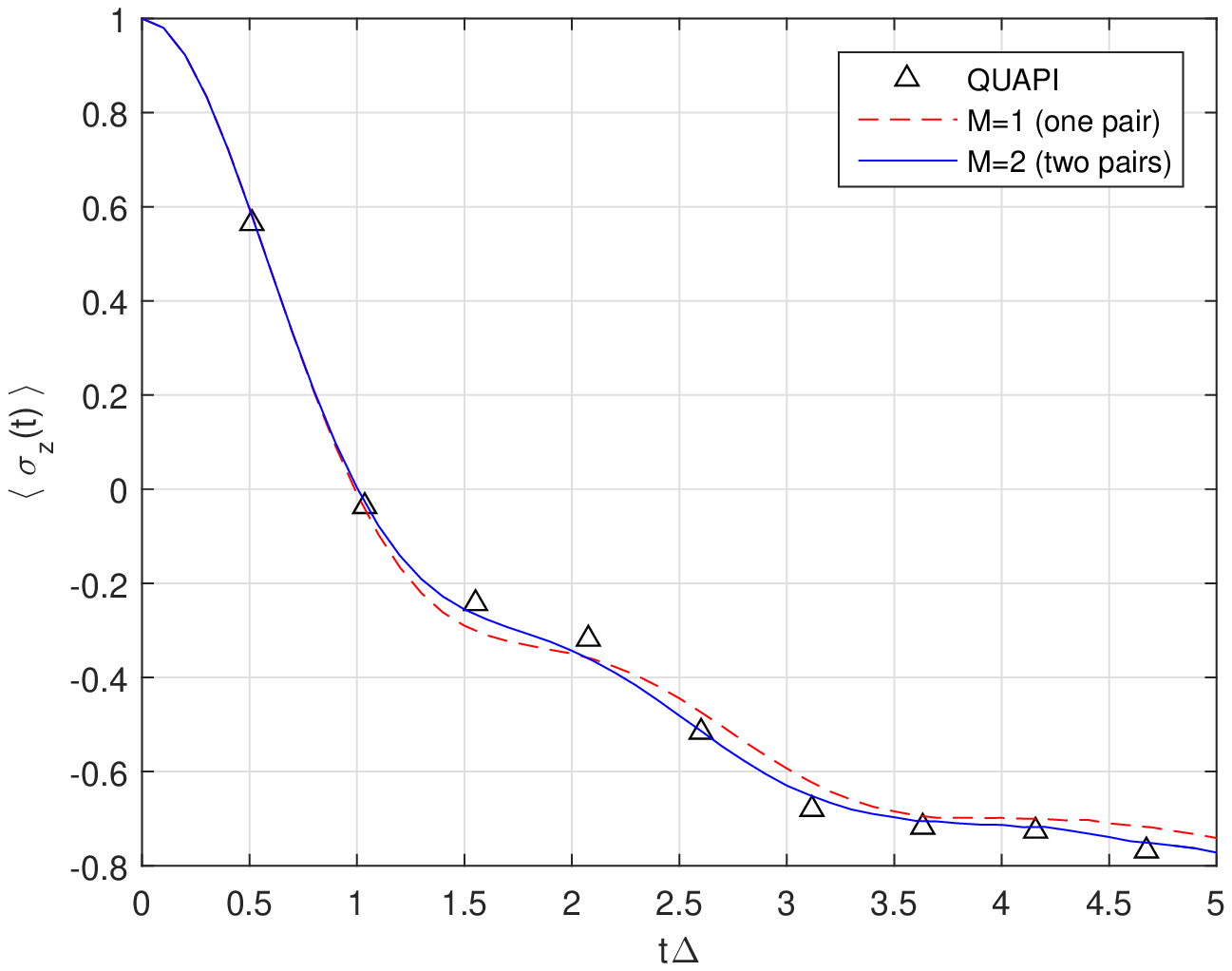}
\caption{Evolution of $\langle \sigma_z(t) \rangle$ under different settings of
the nonadiabaticity (left: $\omega_c = 0.25\Delta$, right: $\omega_c =
2.5\Delta$), with other parameters $\xi = 0.4$ and $\epsilon = \Delta$. QuAPI
results are plotted as a reference.}
\label{fig:nonadiabaticity}
\end{figure}

\section{Summary} \label{sec:summary} We have studied the inchworm
Monte Carlo method introduced in \cite{Cohen2015, Chen2017a}, and
proven rigorously that the method converges to the solution of the
open quantum system whose bath configuration satisfies Wick's
theorem. By assuming the iterative step length to be infinitesimal, we
have derived a new continuous model for the open quantum system, and
have proposed an improvement of the inchworm method to achieve better
numerical efficiency and simpler implementation. \added{Such a
framework is not restricted to the specific case considered in this
paper. In fact, the inchworm method has been applied to a number of
cases beyond the settings of this paper, including
\begin{itemize}
\item The fermionic bath \cite{Cohen2015, Antipov2017}, which needs a
  revision of the definition of $\mc{L}(\mf{q})$ \eqref{eq:Lq};
\item Full Keldysh contour \cite{Antipov2017}, which requires to
  augment Figure \ref{fig:Keldysh} with an imaginary time (equilibrium) part of the contour;
\item Two-time observables such as Green's functions
  \cite{Antipov2017}, which adds some additional restrictions on the
  admissible diagrams;
\item Non-unitary dynamics in the existence of an auxiliary counting
  field \cite{Ridley2018}, in which a non-Hermitian time-dependent
  Hermitian is introduced.
\end{itemize}
These extensions may complicate the mathematical formulation, while
the general ideas can be extended without much difficulty, and it can
be expected that new numerical methods based on integral-differential equation formalism can be derived accordingly.}
For the new numerical method, the stochastic error and the
deterministic error are coupled together, and one needs to apply
numerical analysis to find optimal combinations of parameters. \added{This will be an interesting future direction}.

\added{%
Possible applications of this method to dimension reduction of classical systems are also interesting. One interesting possibility is 
high-dimensional Langevin dynamics, especially when the system has a
conditional Gaussian structure, such as that considered in \cite{Chen2017}. Due to the similarity between
the Fokker-Planck operator and the Hamiltonian operator, the inchworm method
can be applied under certain conditions. More general problems may require
more sophisticated resummation techniques. These extensions are
left for future works.
}

\section*{Acknowledgements}
The work of Zhenning Cai is supported by National University of Singapore
Startup Fund under Grant No. R-146-000-241-133. The work of Jianfeng Lu is
supported in part by National Science Foundation under grant DMS-1454939. This
collaboration is also supported by National Science Foundation under grant
RNMS-1107444 (KI-Net).

\bibliographystyle{plain}
\bibliography{Inchworm}

\end{document}